\newtheorem{Theorem}{Theorem}
\newtheorem{Definition}[Theorem]{Definition}
\newtheorem{Proposition}[Theorem]{Proposition}
\newtheorem{Corollary}{Corollary}[Theorem]
\newenvironment{proof1}{\textit{Proof Theorem 1:}}{\hfill$\square$}
\newenvironment{proof2}{\textit{Proof Theorem 2:}}{\hfill$\square$}
\newenvironment{proof3}{\textit{Proof Theorem 3:}}{\hfill$\square$}
\newenvironment{proof4}{\textit{Proof Theorem 4:}}{\hfill$\square$}
\newcommand{\spec}{\mathrm{spect}}
\newcommand{\Attr}{\mathrm{Attr}}
\newcommand{\Fix}{\mathrm{Fix}}
\newcommand{\Ker}{\mathrm{Ker}}
\newcommand{\RE}{\mathrm{Re}}
\newcommand{\supp}{\mathrm{supp}}
\begin{document}

\title{Number of steady states of quantum evolutions}

\author[1,2,*]{Daniele Amato}
\author[1,2]{Paolo Facchi}

\affil[1]{Dipartimento di Fisica, Università di Bari, I-70126 Bari, Italy}
\affil[2]{INFN, Sezione di Bari, Bari 70126, Italy}
\affil[*]{daniele.amato@uniba.it}

\date{\today}

\maketitle

\begin{abstract}
We prove sharp universal upper bounds on the number of steady and asymptotic states of discrete- and  continuous-time Markovian evolutions of 
open quantum systems. We show that the bounds  depend only on the dimension of the system and not on the details of the dynamics. A comparison with similar bounds deriving from a recent spectral conjecture for Markovian evolutions is also provided.  
\end{abstract}


\section{Introduction}

\label{intro}
Spectral theory is still a hot topic in quantum mechanics. Indeed, quantum theory was developed at the beginning of the last century in order to explain the energy spectra of atoms~\cite{History_QM}.

In particular, the dynamics of a closed quantum system, namely isolated from its surroundings, is encoded in the eigenvalues (energy levels) of its Hamiltonian~\cite{Teschl_Math}. Similarly, for an open quantum system under the Markovian approximation~\cite{breuer_petr}, studying the spectrum of the Gorini-Kossakowski-Lindblad-Sudarshan (GKLS) generator (the open-system analogue of the Hamiltonian) allows us to obtain information about the dynamics of the system~\cite{baum_narn_2008}. 

In spite of this general interplay between spectrum and dynamics, a complete understanding of open-quantum-system evolutions still remains a formidable task. However, a more detailed analysis may be performed if we restrict our attention to the large-time dynamics of open systems. This amounts to study the steady and, more generally, asymptotic states towards which the evolution converges in the asymptotic limit.

This  topic was already investigated at the dawn of the theory of open quantum systems in various works~\cite{Spohn_77,Frigerio_78,Frigerio_Verri_82} (see also the review~\cite{Spohn_rev}), 
focusing on the existence and the uniqueness of a steady state for Markovian evolutions. Moreover, the structure of steady and asymptotic manifolds were taken into account in several later articles~\cite{fagnola_2001,arias_2002,baum_narn_2008,Conserved_Albert,Wolf's_notes,Nigro_steady_states,AFK_asympt_1,AFK_asympt_2,AFK_asympt_3,Yoshida_uniq_steady}. 

 Besides their theoretical importance, stationary states also play a central role in reservoir engineering~\cite{Zoller_res_eng,Wolf_res_eng,Cirac_res_eng}, consisting of properly choosing the system-environment coupling for preparing a target quantum state, or in phase-locking and synchronization of quantum systems~\cite{jex_synchr}.
 
 Moreover, GKLS generators with multiple steady states~\cite{Albert_Thesis} may be used in order to drive a dissipative system into (degenerate) subspaces protected from noise~\cite{Zanardi_noi_sub} or decoherence~\cite{Lidar_dec_sub}, in which only a unitary evolution, related to purely imaginary eigenvalues of the generator~\cite{Conserved_Albert}, may be exploited for the realization of  quantum gates~\cite{Zan_2014,Zan_diss_prot_dyn,gen_ad_Paolo}. For this reason, the analysis of stationary states and, more generally, the study of the relaxation of an open quantum system towards the equilibrium is needed for applications in quantum information storage and processing~\cite{Noise_sub_2001,Dec_sub_2000,top_prot,top_prot_dis}.

 The asymptotic properties of open quantum systems have also been  deeply studied in quantum statistical mechanics. In particular, dissipative quantum phase transitions~\cite{QED_ph_tr,crit_prop}, as well as driven-dissipative systems~\cite{ph_tra,opt_cav}, require the study of the large-time dynamical behaviour of the system.
More generally, determining the steady states of an open system sheds light on the transport properties of the system itself. In particular, the existence of discontinuities of the dimension of the steady-state manifold should correspond to a  jump for the transport features of the system~\cite{benatti_XX_chain_1,benatti_XX_chain_2}. 
 
Finally, open-quantum-system asymptotics naturally emerges in quantum implementations of Hopfield-type attractor neural networks~\cite{hopfield1982neural}. Indeed, the stored memories of such type of network may be identified with the stationary states of its (non-unitary) evolution~\cite{sanpera_attractor_2021,sanpera_attractor_2022_corr}. 

Despite the much effort devoted to the asymptotic dynamics of open quantum systems, general constraints for the number of steady and asymptotic states of quantum evolutions are still to be found, as far as we know. Besides the theoretical relevance of this problem, they may allow us to elucidate the potential of some of the above mentioned applications.    

  In this Article, we find sharp \emph{universal} upper bounds on the number of linearly independent steady and asymptotic states of discrete-time and Markovian continuous-time quantum evolutions. Importantly, these bounds are only related to the dimension of the system and not on the properties of the dynamics.

  The Article is organized as follows. After introducing some preliminary notions in Section~\ref{prel}, we will discuss our main results in Section~\ref{Main_res}, then we will provide explicit examples proving the sharpness of the bounds in Section~\ref{sharp}. Subsequently, before proving the theorems in Section~\ref{proofs}, our results will be compared with analogous bounds derived from a recent universal spectral conjecture proposed in~\cite{Chru_CKK_1} in Section~\ref{rel_CKKS}. Finally, we will draw the conclusions of the work in Section~\ref{concl}.   


\section{Preliminaries}

\label{prel}
In the present Section we will recall some basic notions about evolutions of finite-dimensional open quantum systems, see also Section~\ref{proofs} for more details. 

The state of an arbitrary $d$-level open quantum system is given by a density operator $\rho$, namely a positive semidefinite operator on a Hilbert space $\mathcal{H}$ ($d=\dim\mathcal{H}$) with $\Tr\rho =1$, whereas its dynamics in a given time interval $[0,\tau]$ with $\tau>0$ is described by a quantum channel $\Phi$, namely a completely positive trace-preserving map (a superoperator) on $\mathcal{B}(\mathcal{H})$, the space of linear operators on $\mathcal{H}$~\cite{Hein_Ziman}. 

If the system state at time $t=0$ is $\rho$, its \emph{discrete-time evolution} at time $t=n\tau$, with $n\in \mathbb{N}$, will be given by the action of the $n$-fold composition $\Phi^n$ of the map $\Phi$, namely,
\begin{equation}
  \rho(n\tau)=\Phi^{n}(\rho), \qquad n=0,1,\dots.
  \label{eq:discev}
\end{equation}

As the Hilbert space $\mathcal{H}$ is finite-dimensional, $\mathcal{B}(\mathcal{H})$ is isomorphic to the space of complex matrices of order $d$. We will indicate the space of $d\times d^{\prime}$ matrices with complex entries by $\mathcal{M}_{d,d^\prime}(\mathbb{C})$ and, for the sake of simplicity, $\mathcal{M}_{d}(\mathbb{C})\equiv \mathcal{M}_{d,d}(\mathbb{C})$.

Let 
$\mu_{\alpha}$, $\alpha=0,\dots , n-1$, with $n\leqslant d^2$ be the distinct eigenvalues of $\Phi$, namely 
\begin{equation}
\Phi(A_{\alpha})=\mu_{\alpha}A_{\alpha}, 
\end{equation}
 with $A_{\alpha}$ being an eigenoperator corresponding to $\mu_\alpha$. The spectrum $\spec (\Phi)$ is the set of eigenvalues of $\Phi$.  Let $\ell_\alpha$ be the algebraic multiplicity~\cite{horn_john} of the eigenvalue $\mu_\alpha$, so that
$\sum_{\alpha=0}^{n-1} \ell_\alpha = d^2$.
 It is well known~\cite{Wolf's_notes} that: 
 
 \noindent i) the spectrum is contained in the unit disk, 
\begin{equation}
	\spec(\Phi)\subseteq\mathbb{D}, \qquad \mathbb{D}=\{ \lambda\in\mathbb{C} \,:\, |\lambda|\leqslant 1 \};
\end{equation}
 ii) 1 is always an eigenvalue, namely,
 \begin{equation}
  \mu_0=1\in \spec (\Phi);	
 \end{equation}
  iii) the spectrum is symmetric with respect to the real axis, i.e.,
  \begin{equation}
  	 \mu_\alpha\in \spec (\Phi) \; \Rightarrow \; \mu_\alpha^*\in \spec (\Phi), \qquad \text{and}\quad  \Phi(A^{\dagger}_\alpha)=\mu_{\alpha}^{\ast}A^{\dagger}_\alpha,
  \end{equation}
iv) the  unimodular or \textit{peripheral} eigenvalues $\mu_\alpha \in \partial\mathbb{D}$, the boundary of $\mathbb{D}$, are semisimple, i.e.\ their algebraic multiplicity $\ell_\alpha$ coincides with their geometric multiplicity.
  
  The eigenspace $\Fix(\Phi)$ corresponding to $\mu_{0}=1$, called the \textit{fixed-point space} of $\Phi$, is spanned by a set of $\ell_0$ density operators, which are the \textit{steady} (or stationary) states of the channel $\Phi$.
%
  
  Also, the space $\Attr(\Phi)$ corresponding to the peripheral eigenvalues $\mu_\alpha \in \partial\mathbb{D}$ is known as the asymptotic~\cite{albert_2019} or the \textit{attractor subspace}~\cite{jex_st_2012,jex_st_2018}   of the channel $\Phi$, since the evolution $\Phi^n(\rho)$ of any initial state $\rho$ asymptotically moves towards this space  for large times, i.e.\ as $n\to\infty$, see Section~\ref{proofs} for more details. These limiting states may be called oscillating or \textit{asymptotic states}, and it is always possible to construct a basis of such states for the subspace $\Attr(\Phi)$, analogously to $\Fix(\Phi)$. 
  
  Note that closed-system evolutions are described by a \textit{unitary channel} \begin{equation}
  \Phi(X)=UX U^{\dagger}, \quad \text{for all } X \in \mathcal{B}(\mathcal{H}),
\end{equation}
 and some unitary $U$. Importantly, a quantum channel is unitary if and only if $\spec (\Phi)\subseteq \partial\mathbb{D}$, i.e.\ all its eigenvalues belong to the unit circle~\cite{Wolf's_notes}. 

The \emph{Markovian continuous-time evolution} of an open quantum system is described by a quantum dynamical semigroup~\cite{Spohn_rev} 
\begin{equation}
  \rho(t)=\Phi_{t}(\rho)=e^{t\mathcal{L}}\rho, \qquad t\geqslant 0,
  \label{eq:contev}
\end{equation}
 where the generator $\mathcal{L}$ takes the well-known \textit{GKLS form}~\cite{GKS_76,Lindblad_76}
\begin{equation}
\label{GKLS}
\mathcal{L}(X)=-i[H, X]+\sum_{k=1}^{d^{2}-1}
 \left( A_{k} X A_{k}^{\dagger}-\frac{1}{2} \{ A_{k}^{\dagger}A_{k} , X \} \right) =\mathcal{L}_{H}(X)+\mathcal{L}_{D}(X) ,\quad  X \in \mathcal{B}(\mathcal{H}) ,
\end{equation}
where the square (curly) brackets represent the (anti)commutator, $H=H^\dag$ is the system Hamiltonian, the noise operators $A_{k}$ are arbitrary, 
and the first and second terms $\mathcal{L}_{H}$ and $\mathcal{L}_{D}$ in Eq.~\eqref{GKLS} are called the Hamiltonian and dissipative parts of the generator, respectively. Notice that the GKLS form~\eqref{GKLS} is not unique and, in particular, so is the decomposition of $\mathcal{L}$ into Hamiltonian and dissipative contributions. $\mathcal{L}$ is called a \textit{Hamiltonian generator} if $\mathcal{L}_D = 0$ for one (and hence all) GKLS representation~\eqref{GKLS}. 

If $\lambda_{\alpha}$, $\alpha=0,\dots,m-1$, with $m\leqslant d^2$, denote the distinct eigenvalues of $\mathcal{L}$, from the GKLS form one obtains that $\lambda_{0}=0$ and, given  an eigenoperator $X_{0}\geqslant 0$ corresponding to this eigenvalue, then $X_{0}/ \Tr(X_{0})$ is a steady state of $\Phi_{t}=e^{t\mathcal{L}}$~\cite{Alicki_Lendi}. The kernel of $\mathcal{L}$, i.e.\ the eigenspace corresponding to the zero eigenvalue, will be denoted by $\Ker(\mathcal{L})$. Moreover,  
\begin{equation}
  \lambda_{\alpha} \in \spec(\mathcal{L}) \Rightarrow \lambda_{\alpha}^{\ast} \in \spec(\mathcal{L}), \quad  \text{and} \quad \RE(\lambda_{\alpha})=-\Gamma_{\alpha}\leqslant 0
\end{equation}
with $\Gamma_{\alpha}$ being the  \textit{relaxation rates} of $\mathcal{L}$. These parameters, describing the relaxation properties of an open system~\cite{kimura_17}, may be experimentally measured. A condition for the relaxation rates of a quantum dynamical semigroup, recently conjectured in~\cite{Chru_CKK_1} and which we will call Chru{\'s}ci{\'n}ski-Kimura-Kossakowski-Shishido (CKKS) bound, is recalled in Section~\ref{rel_CKKS} in order to investigate its relation with the main results of this work, stated in Section~\ref{Main_res}. 

Finally, note that the purely imaginary (peripheral) eigenvalues of $\mathcal{L}$ are semisimple and are related to the large-time dynamics of $\Phi_t = e^{t\mathcal{L}}$, as the space corresponding to such eigenvalues is the asymptotic manifold $\Attr(\mathcal{L})$ of the Markovian evolution, see Section~\ref{proofs} for details. Importantly, as for unitary channels,  the generator $\mathcal{L}$ is Hamiltonian if and only if $\Gamma_\alpha = 0$ for all $\alpha = 0, \dots , m-1$, i.e.\ all its eigenvalues are peripheral.


\section{Bounds on the dimensions of the asymptotic manifolds}
\label{Main_res}
In this Section we will present the main results of this work, whose proofs are postponed to Section~\ref{proofs}. First, let us introduce the quantities involved in our findings.
Remember that we denoted with $\mu_{\alpha}$ the $\alpha$-th distinct eigenvalue of $\Phi$ and with $\ell_\alpha$ its algebraic multiplicity with 
$\alpha =0, \dots , n-1$. In particular,  $\ell_0$ is the algebraic multiplicity of $\mu_{0}=1$, and coincides with the dimension of its eigenspace, the steady-state manifold, i.e.\
 \begin{equation}
  \ell_0 = \dim \Fix(\Phi).
\end{equation}
We define the \textit{peripheral multiplicity} $\ell_{\mathrm{P}}$ of $\Phi$ as the sum of the multiplicities of all  peripheral eigenvalues, which coincides with the dimension of the attractor subspace $\Attr(\Phi)$, made up of asymptotic states. Namely,
\begin{equation}
\label{lP_def}
    \ell_{\mathrm{P}}= \sum_{\mu_{\alpha}\in \partial\mathbb{D}} \ell_{\alpha} = \dim\Attr(\Phi).
\end{equation}
\textit{Physically, $\ell_0$ and $\ell_P$ are respectively the number of  independent steady and asymptotic states of the evolution described by $\Phi$.}

Analogously, denote with $m_\alpha$ ($\alpha=0,\dots , m-1$) the algebraic multiplicity of the $\alpha$-th \emph{distinct} eigenvalue $\lambda_{\alpha}$ of the generator $\mathcal{L}$ of the continuous-time semigroup~\eqref{eq:contev}. In particular $m_0$ denotes the multiplicity of the zero eigenvalue $\lambda_{0}=0$, so that
\begin{equation}
  m_0 = \dim \Ker(\mathcal{L}).
\end{equation}
Moreover, the \textit{peripheral multiplicity} $m_{\mathrm{P}}$ of $\mathcal{L}$ is the sum of the multiplicities of its purely imaginary eigenvalues and measures the dimension of its attractor manifold:
\begin{equation}
\label{mP_def}
    m_{\mathrm{P}}= \sum_{\lambda_{\alpha} \in i\mathbb{R}}m_{\alpha} = \dim \Attr(\mathcal{L}).
\end{equation}

 \textit{The integers $m_0$ and $m_P$ represent respectively the number of  independent steady and asymptotic states of the Markovian evolution $\Phi_t = e^{t\mathcal{L}}$ generated by $\mathcal{L}$.}

Now we will provide sharp upper bounds on such multiplicities. 
Let us call a quantum channel \textit{non-trivial} if it is different from the identity channel, $\Phi(\rho)=\rho$.
\begin{Theorem}[Unitary discrete-time evolution]
Let $\Phi$ 
be a non-trivial \emph{unitary} quantum channel on a $d$-dimensional system. 
Then the  multiplicity $\ell_{0}$ of the eigenvalue 1 and the peripheral multiplicity $\ell_{\mathrm{P}}$ of $\Phi$ satisfy
\begin{equation}
\ell_{0} \leqslant d^{2}-2d+2,\qquad \ell_{\mathrm{P}} = d^2.
\label{sharp_m0_phi}
\end{equation}
\label{Th_1_qc}
\end{Theorem}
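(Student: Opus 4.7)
The plan is to exploit the tensor-product structure of a unitary channel. Writing $\Phi(X)=UXU^\dagger$ and vectorizing, $\Phi$ acts on $\mathcal{M}_d(\mathbb{C})\cong\mathbb{C}^d\otimes\mathbb{C}^d$ as $U\otimes\bar U$. Diagonalizing $U=\sum_{i=1}^d u_i |i\rangle\langle i|$ in an orthonormal eigenbasis, the eigenoperators of $\Phi$ are the rank-one operators $|i\rangle\langle j|$ with eigenvalue $u_i u_j^\ast$, for $i,j\in\{1,\dots,d\}$. This produces a basis of $d^2$ eigenoperators whose eigenvalues all lie on $\partial\mathbb{D}$, which immediately gives $\ell_{\mathrm{P}}=d^2$ (equivalently, this follows from the characterization recalled in the Preliminaries that $\Phi$ is unitary iff $\spec(\Phi)\subseteq\partial\mathbb{D}$).

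For $\ell_0$, I would count the pairs $(i,j)$ with $u_i u_j^\ast=1$, i.e.\ $u_i=u_j$. Grouping the eigenvalues of $U$ into its $k$ \emph{distinct} values with multiplicities $n_1,\dots,n_k$ (so $\sum_a n_a=d$), this gives exactly
\begin{equation}
\ell_0=\sum_{a=1}^{k} n_a^2 .
\end{equation}
Non-triviality of $\Phi$ is the crucial constraint: $\Phi=\mathrm{id}$ iff $UXU^\dagger=X$ for all $X$, iff $U$ commutes with every operator, iff $U=e^{i\theta}I$, iff $k=1$. Hence non-trivial $\Phi$ forces $k\geqslant 2$.

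The remaining step is a short discrete optimization: maximize $\sum_{a=1}^k n_a^2$ over positive integers with $\sum n_a=d$ and $k\geqslant 2$. Since $n_a^2+n_b^2\leqslant (n_a+n_b-1)^2+1^2$ whenever $n_a,n_b\geqslant 1$ and $n_a+n_b\geqslant 2$, one can repeatedly merge two parts into one (of size $n_a+n_b-1$) plus a singleton without decreasing the sum of squares. Iterating until only two parts remain yields the unique maximizer $(n_1,n_2)=(d-1,1)$, for which
\begin{equation}
\ell_0\leqslant (d-1)^2+1^2=d^2-2d+2,
\end{equation}
as claimed. Sharpness, which is not required here but is natural to remark, is attained by $U=\mathrm{diag}(1,\dots,1,e^{i\theta})$ with $\theta\neq 0$.

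I do not foresee a real obstacle: the argument is essentially an eigenvalue-count for $U\otimes\bar U$ combined with an elementary integer inequality. The only point requiring a touch of care is the equivalence ``$\Phi$ trivial $\Leftrightarrow$ $U$ scalar'', which pins down why the maximum $\ell_0=d^2$ is excluded and the next value $d^2-2d+2$ becomes the correct upper bound.
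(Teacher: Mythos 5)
Your proposal is correct and follows essentially the same route as the paper: the paper's proof also diagonalizes $U$, observes that the eigenvalues of $\Phi$ are the products $\lambda_k\lambda_\ell^\ast$ of unimodular eigenvalues of $U$ (so $\ell_{\mathrm{P}}=d^2$ is immediate), and notes that the multiplicity of $1$ is maximized at $d^2-2d+2$ when $\lambda_1=\dots=\lambda_{d-1}\neq\lambda_d$. You merely spell out the integer optimization $\max\sum_a n_a^2$ over partitions of $d$ with at least two parts, which the paper asserts without detail.
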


\begin{figure}[t!]
\begin{center}
\includegraphics[width=0.65\linewidth , height=4.4 cm]{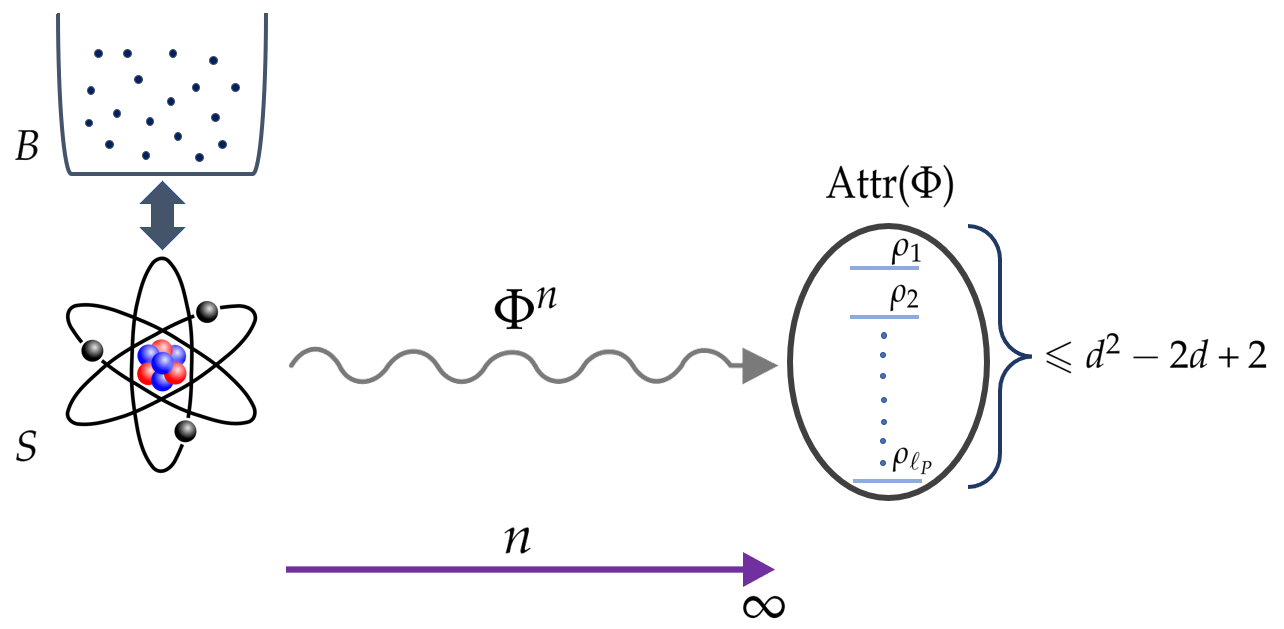}
\end{center}
\caption{Schematic representation of the content of Theorem~\ref{Th_2_qc}. A system $S$ coupled to a bath $B$ evolves according to the non-unitary discrete-time evolution $\Phi^n$ with $n\geqslant 1$. The asymptotic states $\rho_{1}\,,\dots , \rho_{\ell_{\mathrm{P}}}$ of $S$, spanning the attractor subspace $\Attr(\Phi)$, are at most $d^{2}-2d+2$, where $d$ is the dimension of the system.}
\label{figura1}
\end{figure}
 
\begin{Theorem}[Non-unitary discrete-time evolution]  Let $\Phi$ be a \emph{non-unitary} quantum channel.
Then the  multiplicity $\ell_{0}$ of the eigenvalue 1 and the peripheral multiplicity $\ell_{\mathrm{P}}$ of $\Phi$ obey
\begin{equation}
\ell_0 \leqslant \ell_{\mathrm{P}} \leqslant d^{2}-2d+2.
\label{sharp_mP_phi}
\end{equation}
\label{Th_2_qc}
\end{Theorem}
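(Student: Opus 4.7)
The first inequality $\ell_0 \leq \ell_P$ is immediate: the eigenvalue $\mu_0 = 1$ is peripheral, so $\Fix(\Phi) \subseteq \Attr(\Phi)$, and passing to dimensions gives $\ell_0 \leq \ell_P$.

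For the main estimate $\ell_P \leq d^{2}-2d+2$, my plan is to exploit the block structure of the attractor subspace of a CPTP map, in the spirit of Baumgartner--Narnhofer and Wolf~\cite{baum_narn_2008,Wolf's_notes,AFK_asympt_1}. That structural result provides an orthogonal splitting $\mathcal{H} = \mathcal{H}_R \oplus \mathcal{H}_D$ into recurrent and decaying subspaces, together with a finer tensor-product decomposition of $\mathcal{H}_R$, such that $\Attr(\Phi)$ is isomorphic, as a vector space, to a direct sum of full matrix algebras
\[
\Attr(\Phi) \cong \bigoplus_{s=1}^{S} \mathcal{M}_{D_s}(\mathbb{C}),
\]
with integers $D_s, m_s \geq 1$ subject to the embedding constraint $\sum_{s=1}^{S} D_s\, m_s \leq d$, where $m_s$ is the auxiliary multiplicity of the $s$-th block. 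In particular $\ell_P = \sum_s D_s^{2}$ and $\sum_s D_s \leq d$. The crucial dividend is the characterization of unitarity: recalling from Section~\ref{prel} that $\Phi$ is unitary iff $\spec(\Phi) \subseteq \partial\mathbb{D}$, this is equivalent to $\Attr(\Phi) = \mathcal{B}(\mathcal{H})$, which forces $S = 1$, $D_1 = d$, $m_1 = 1$ and $\mathcal{H}_D = \{0\}$. Hence non-unitarity of $\Phi$ means either (i) $S = 1$ with $D_1 \leq d - 1$, or (ii) $S \geq 2$.

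Finally, I would conclude by an elementary integer optimization. In case (i), $\ell_P = D_1^2 \leq (d-1)^2 < d^2 - 2d + 2$. In case (ii), I maximize $\sum_s D_s^2$ subject to $\sum_s D_s \leq d$, $D_s \geq 1$, and $S \geq 2$; by convexity, the supremum is attained at the extreme partition $(D_1, D_2) = (d-1, 1)$ with $S = 2$, giving $(d-1)^2 + 1 = d^2 - 2d + 2$. Taking the larger of the two yields the claimed bound, which is saturated by the pinching channel $\Phi(\rho) = P_{d-1} \rho P_{d-1} + P_{1} \rho P_{1}$ with $P_{d-1}, P_{1}$ orthogonal projectors of ranks $d-1$ and $1$ respectively. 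The main technical obstacle will be invoking the attractor structure theorem in exactly the form above and verifying the unitary-iff-trivial-data equivalence cleanly; once these ingredients are in hand, the bound collapses to the straightforward optimization described.
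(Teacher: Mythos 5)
Your argument is correct, and it takes a genuinely different route from the paper's --- in fact it is exactly the alternative the authors flag in the closing remark of Section~\ref{proofs}, that the bounds ``may also be proved by using the structure theorems on the asymptotic evolution of quantum channels''. The paper never touches the block decomposition of $\Attr(\Phi)$: it first bounds $\ell_0$ by identifying $\Fix(\Phi^{\ast})$ with the commutant of the Kraus operators (Proposition~\ref{fixed_p}, after reducing to a faithful channel on a strictly smaller space via Proposition~\ref{Phi_Phitilde} when needed) and bounding the dimension of the commutant of a non-scalar operator by $d^{2}-2d+2$ (Corollary~\ref{up_bound}); it then gets the $\ell_{\mathrm{P}}$ bound for free by observing that the peripheral spectral projection $\mathcal{P}_{\mathrm{P}}$ is itself a non-unitary quantum channel whose fixed-point space is precisely $\Attr(\Phi)$. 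You instead import the Baumgartner--Narnhofer/Wolf structure theorem $\Attr(\Phi)\cong\bigoplus_{s}\mathcal{M}_{D_s}(\mathbb{C})$ with $\sum_{s}D_s m_s\leqslant d$ and reduce the bound to an integer optimization over partitions; the convexity argument, the identification of the extremal partition $(d-1,1)$, and the equivalence ``unitary iff $\Attr(\Phi)=\mathcal{B}(\mathcal{H})$'' via the $\spec(\Phi)\subseteq\partial\mathbb{D}$ criterion of Section~\ref{prel} are all sound. The paper's route buys self-containedness (only the commutant-dimension count and the Kraus characterization of fixed points are needed, plus the trick of applying the $\ell_0$ bound to $\mathcal{P}_{\mathrm{P}}$); your route leans on a heavier theorem but yields more, namely the complete list of attainable values of $\ell_{\mathrm{P}}$ as sums of squares of parts of partitions fitting inside $d$, which makes the jump from $d^{2}$ down to $(d-1)^{2}+1$ structurally transparent. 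Your pinching example is essentially the paper's saturating phase-damping channel of Section~\ref{sharp}. The only point to handle with care in a full write-up is to invoke the structure theorem for the whole attractor subspace (not merely for $\Fix(\Phi)$), since the two differ when the peripheral spectrum is nontrivial; in the form you quote, with fixed density matrices on the multiplicity factors, the dimension count $\ell_{\mathrm{P}}=\sum_s D_s^{2}$ is indeed as you claim.
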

The content of the latter result is schematically illustrated in Fig.~\ref{figura1}.
Now, it is possible to construct quantum channels with $\ell_{0}$ and $\ell_{\mathrm{P}}$ attaining the equalities in Eqs.~\eqref{sharp_m0_phi} and~\eqref{sharp_mP_phi}, namely all the upper bounds are \textit{sharp}, see Section~\ref{sharp} for explicit examples. Obviously, for a trivial quantum channel $\ell_{0}=\ell_{\mathrm{P}} = d^{2}$, therefore the bounds~\eqref{sharp_m0_phi} and~\eqref{sharp_mP_phi} are not valid. 

The above results, valid for discrete-time evolutions~\eqref{eq:discev} are perfectly mirrored by the following results on Markovian continuous-time evolutions~\eqref{eq:contev}, with GKLS generators~\eqref{GKLS}.
\begin{Theorem}[Hamiltonian generator]
Let $\mathcal{L}$ be a non-zero \emph{Hamiltonian} GKLS generator. Then the  multiplicity $m_{0}$ of the zero eigenvalue and the peripheral multiplicity $m_{\mathrm{P}}$ of $\mathcal{L}$ fulfill
\begin{equation}
m_0 \leqslant d^2 - 2d+2,\qquad m_{\mathrm{P}} = d^2.
\label{sharp_m0_Ham_L}
\end{equation}
\label{Th_1_L}
\end{Theorem}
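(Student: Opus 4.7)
The plan is to reduce everything to the spectral analysis of the adjoint action $\mathcal{L}_H(X) = -i[H,X]$ of a Hermitian operator $H$. Since $\mathcal{L}$ is Hamiltonian, some GKLS representation has $\mathcal{L}_D = 0$, so $\mathcal{L}(X) = -i[H,X]$ for a Hermitian $H \in \mathcal{B}(\mathcal{H})$. Writing the spectral decomposition $H = \sum_{j=1}^{r} h_j P_j$, with distinct real eigenvalues $h_j$ and orthogonal spectral projections $P_j$ of rank $d_j$ (so $\sum_j d_j = d$), I would first diagonalize $\mathcal{L}$ explicitly. Picking an orthonormal eigenbasis $\{\ket{j,a}\}$ of $H$ adapted to the spectral projections, the rank-one operators $\ket{j,a}\!\bra{k,b}$ form a basis of $\mathcal{B}(\mathcal{H})$ consisting of eigenoperators of $\mathcal{L}$ with eigenvalues $-i(h_j - h_k)$.

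The second equality in Eq.~\eqref{sharp_m0_Ham_L} is then immediate: all $d^2$ eigenvalues $-i(h_j-h_k)$ are purely imaginary, so every eigenvalue of $\mathcal{L}$ is peripheral, and since $\mathcal{L}$ is diagonalizable with $d^2$ linearly independent eigenoperators, $m_{\mathrm{P}} = d^2$. For the bound on $m_0$, I would observe that the zero eigenvalue corresponds precisely to pairs $(j,k)$ with $h_j = h_k$, i.e.\ $j=k$, so $\Ker(\mathcal{L})$ is the commutant of $H$ in $\mathcal{B}(\mathcal{H})$, whose dimension is
\begin{equation}
  m_0 = \dim\Ker(\mathcal{L}) = \sum_{j=1}^r d_j^2.
\end{equation}

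Now I would invoke the hypothesis that $\mathcal{L}$ is non-zero: if $H$ were proportional to the identity, then $\mathcal{L}_H \equiv 0$, contradicting the assumption. Hence $H$ has at least two distinct eigenvalues, i.e.\ $r \geqslant 2$. The final step is the elementary combinatorial optimization
\begin{equation}
  \max \sum_{j=1}^{r} d_j^2 \quad\text{subject to}\quad \sum_{j=1}^{r} d_j = d,\; d_j \geqslant 1,\; r \geqslant 2.
\end{equation}
By convexity of $x\mapsto x^2$, the maximum is attained at an extreme point of the feasible simplex, namely when one $d_j$ is as large as possible and the others are minimal; with the constraint $r\geqslant 2$, this forces $(d_1,d_2) = (d-1,1)$ (up to relabelling), giving $\sum_j d_j^2 \leqslant (d-1)^2 + 1 = d^2 - 2d + 2$, which is the claimed bound on $m_0$.

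I do not foresee a real obstacle here: the whole argument is an application of the standard fact that the adjoint action of a Hermitian operator is diagonalizable with purely imaginary spectrum, combined with a one-line optimization. The only point requiring a little care is the justification that \emph{every} GKLS representation of a Hamiltonian generator has trivial dissipative part, so that the chosen $H$ is really Hermitian and the above spectral picture applies; this is however guaranteed by the definition of ``Hamiltonian generator'' recalled in Section~\ref{prel}.
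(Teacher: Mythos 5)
Your proposal is correct and follows essentially the same route as the paper: both diagonalize the adjoint action $-i[H,\cdot]$ to get eigenvalues $-i(h_j-h_k)$, conclude $m_{\mathrm{P}}=d^2$ from their being purely imaginary, and identify the maximizing spectral configuration $(d-1)^2+1=d^2-2d+2$ for $m_0$ under the constraint that $H$ is non-scalar. You merely spell out the convexity/partition optimization that the paper leaves implicit.
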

\begin{Theorem}[Non-Hamiltonian GKLS generator]  Let $\mathcal{L}$ be a \emph{non-Hamiltonian} GKLS generator. Then the  multiplicity $m_{0}$ of the zero eigenvalue and the peripheral multiplicity $m_{\mathrm{P}}$ of $\mathcal{L}$ satisfy
\begin{equation}
m_{0} \leqslant m_{\mathrm{P}} \leqslant  d^{2}-2d+2. 
\label{sharp_m0_L}
\end{equation}
\label{Th_2_L}
\end{Theorem}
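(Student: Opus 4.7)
The strategy is to reduce Theorem~\ref{Th_2_L} to the discrete-time bound of Theorem~\ref{Th_2_qc} applied to the channel $\Phi_t = e^{t\mathcal{L}}$ for a sufficiently small $t > 0$. The spectral mapping theorem relates the eigenvalues via $\mu_\alpha = e^{t\lambda_\alpha}$, and crucially the Jordan structure of $\mathcal{L}$ is preserved under exponentiation: a Jordan block of $\mathcal{L}$ of size $r$ at $\lambda_\alpha$ produces a Jordan block of $\Phi_t$ of size $r$ at $e^{t\lambda_\alpha}$, so algebraic multiplicities transfer whenever the exponential is injective on $\spec(\mathcal{L})$.

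The first step is the choice of $t$. Since $\spec(\mathcal{L})$ is a finite subset of the closed left half-plane, one may select $t$ small enough that $t\,|\mathrm{Im}(\lambda_\alpha-\lambda_\beta)| < 2\pi$ for every pair of distinct eigenvalues; this guarantees injectivity of $\lambda \mapsto e^{t\lambda}$ on $\spec(\mathcal{L})$, so the induced bijection $\lambda_\alpha \leftrightarrow \mu_\alpha = e^{t\lambda_\alpha}$ preserves algebraic multiplicities. Under this correspondence, the zero eigenvalue of $\mathcal{L}$ matches $\mu_0 = 1$, yielding $m_0 = \ell_0(\Phi_t)$, and the purely imaginary eigenvalues of $\mathcal{L}$ match exactly the unimodular eigenvalues of $\Phi_t$, yielding $m_P = \ell_P(\Phi_t)$.

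The second step is to verify that $\Phi_t$ meets the hypothesis of Theorem~\ref{Th_2_qc}, namely that it is non-unitary. Because $\mathcal{L}$ is non-Hamiltonian, the characterization recalled at the end of Section~\ref{prel} furnishes an eigenvalue $\lambda_\alpha$ with $\Gamma_\alpha = -\RE(\lambda_\alpha) > 0$, whence $|e^{t\lambda_\alpha}| = e^{-t\Gamma_\alpha} < 1$ and $\spec(\Phi_t) \not\subseteq \partial\mathbb{D}$. Applying Theorem~\ref{Th_2_qc} to $\Phi_t$ then gives $\ell_0(\Phi_t) \leqslant \ell_P(\Phi_t) \leqslant d^2 - 2d + 2$, which via the matching above is exactly the claimed bound $m_0 \leqslant m_P \leqslant d^2 - 2d + 2$.

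The main technical point that needs care is the preservation of Jordan block sizes under exponentiation, which underpins the transfer of algebraic multiplicities from $\mathcal{L}$ to $\Phi_t$; this is a standard fact provable by simultaneous Jordan triangularization of $\mathcal{L}$ and $e^{t\mathcal{L}}$, but should be stated explicitly rather than glossed over. Once that observation is secured, the rest of the argument is a dictionary between the additive spectrum of the generator and the multiplicative spectrum of the exponentiated channel.
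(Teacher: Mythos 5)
Your proof is correct and takes essentially the same route as the paper: the paper likewise reduces to Theorem~\ref{Th_2_qc} by applying it to the non-unitary channel $\Phi=e^{\mathcal{L}}$ and identifying $m_{\mathrm{P}}=\dim\Attr(\mathcal{L})=\dim\Attr(\Phi)=\ell_{\mathrm{P}}$, while disposing of $m_0\leqslant m_{\mathrm{P}}$ as trivial. The only difference is that the paper works directly at $t=1$ without your small-$t$ injectivity device, which is not needed for the peripheral count because $|e^{\lambda_\alpha}|=1$ if and only if $\lambda_\alpha\in i\mathbb{R}$, so the attractor subspaces coincide even when the exponential identifies distinct eigenvalues of $\mathcal{L}$.
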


The bounds~\eqref{sharp_m0_Ham_L} and~\eqref{sharp_m0_L} are also sharp as the previous ones, see Section~\ref{sharp}. Clearly, the two latter theorems do not apply to the zero operator because in such case $m_0 = m_{\mathrm{P}}=d^{2}$.

Theorem~\ref{Th_1_qc} provides a tight universal upper bound on the number of linearly independent steady states of a (non-trivial) unitary quantum channel $\Phi$, depending only on the dimension $d$ of the system. Similarly, Theorem~\ref{Th_2_qc} shows that the number of linearly independent steady and asymptotic states of a non-unitary channel $\Phi$ is bounded from above by the same $d$-dependent quantity. Theorems~\ref{Th_1_L} and~\ref{Th_2_L} provide analogous constraints for non-zero Hamiltonian and non-Hamiltonian generators respectively and, indeed, Theorem~\ref{Th_2_L} easily follows from Theorem~\ref{Th_2_qc}, as shown in Section~\ref{proofs}.

Interestingly, Theorem~\ref{Th_2_L} implies that when we add to a Hamiltonian generator a dissipative part, no matter how small, the peripheral multiplicity $m_{\mathrm{P}}$ undergoes a jump larger than the gap 
\begin{equation}
\Delta
=2(d-1),
\end{equation}
varying linearly with $d$. Consequently, we have ì\begin{equation}
N_{\mathrm{f}}
=2(d-1)-1
\end{equation}
forbidden values for $m_{\mathrm{P}}$.
 The same jump for the peripheral multiplicity $\ell_{\mathrm{P}}$ occurs when we pass from unitary channels to non-unitary ones according to the bound~\eqref{sharp_mP_phi}.
 
 \section{Sharpness of the bounds}
\label{sharp}
In this Section we will prove the sharpness of the bounds stated in Theorems~\ref{Th_1_qc}-\ref{Th_2_L}.
Let us start with the proof of the sharpness of the bound~\eqref{sharp_m0_Ham_L} for non-zero Hamiltonian GKLS generators. If we take 
\begin{equation}
\mathcal{L}(X) = -i[H , X], \qquad  H = h_1 \ketbra{e_1}{e_1} + h_2 \sum_{i=2}^{d} \ketbra{e_i}{e_i},
\qquad  h_1 \neq h_2 \in \mathbb{R},\quad X \in \mathcal{B}(\mathcal{H}),
\end{equation}
for some basis $\{ \ket{e_i} \}_{i=1}^d$ of $\mathcal{H}$, then it is immediate to check that \begin{equation}
  \Ker(\mathcal{L})=\operatorname{span}\{\ketbra{e_1}{e_1},  \ketbra{e_j}{e_k} \;:\; j,k=2,\dots,d\},
\end{equation}
whence $m_0 = (d-1)^2 + 1 = d^2 - 2d+2$. Furthermore, if we require that 
\begin{equation}
\label{eigen_con_Ham_gen}
h_1 - h_2 \neq 2k \pi,\;\; \forall k \in \mathbb{Z},
\end{equation}
the multiplicity $\ell_0$ of the corresponding unitary channel $\Phi = e^{\mathcal{L}}$ attains the inequality in Eq.~\eqref{sharp_m0_phi}. Note that condition~\eqref{eigen_con_Ham_gen} guarantees that $\Phi$ is not trivial. 

Let us now turn our attention to the sharpness of the bounds~\eqref{sharp_m0_L} for GKLS generators. 
Recall that the commutant $S^\prime$ of a system of operators $S = \{ A_{k} \}_{k=1}^{M} \subset \mathcal{B}(\mathcal{H})$ is defined as
\begin{equation}
S^{\prime}=\{ B \in\mathcal{B}(\mathcal{H}) \,:\, A_{k}B=BA_{k},\;\;k=1,\dots , M \}.
\end{equation}
Now consider the system $S=\{ A_{k}\}_{k=1}^{N}$ of diagonal operators with respect to the basis $\{ \ket{e_i} \}_{i=1}^d$ of $\mathcal{H}$ with
\begin{equation}
A_{k}=\lambda_{1}^{(k)} P_{1}+\lambda_{2}^{(k)} P_{2},\;\;k=1,\dots, N.
\label{diagonal}
\end{equation}
Here, $\lambda_{1}^{(k)},\lambda_{2}^{(k)} \in \mathbb{C}$, with $\lambda_{1}^{(k)}\neq\lambda_{2}^{(k)}$, are the eigenvalues of $A_k$, and
\begin{equation}
P_{1}= \ketbra{e_1}{e_1}
,\;\;P_{2}=\mathbb{I}-P_{1},
\end{equation}
are the corresponding spectral projections, with $\mathbb{I}$ being the identity operator on $\mathcal{H}$.
Note that, by construction, the eigenvalues $\lambda_{1}^{(k)},\lambda_{2}^{(k)}$ have respectively multiplicities $m_{1}=1$, $m_{2}=d-1$ for all $k=1,\dots, N$. Let us now take into account the generator
\begin{equation}
\mathcal{L}(X)=\sum_{k=1}^{N} A_{k} X A_{k}^{\dagger}-\frac{1}{2} \{ A_{k}^{\dagger}A_{k} , X \},
\label{sharp_L}
\end{equation}
for which $\mathcal{L}(\mathbb{I})= 0$. We have
\vspace*{0.2 cm}
\begin{equation}
m_{0}=\dim \Ker(\mathcal{L})=\dim S^{\prime}= \dim \{ P_{1} \}^{\prime}=d^{2}-2d+2 .
\end{equation}
Here, the second and fourth equalities follow respectively from Proposition~\ref{fixed_L} and Corollary~\ref{up_bound} in Section~\ref{proofs}, whereas
the third one is a consequence of Eq.~\eqref{diagonal}. Moreover, 
as $\mathcal{L}$ is non-Hamiltonian by construction, we necessarily have 
\begin{equation}
m_{0}=m_{\mathrm{P}}=d^{2}-2d+2,
\end{equation} 
by Theorem~\ref{Th_2_L}.
A quantum channel saturating the equalities in Eq.~\eqref{sharp_mP_phi} is simply $\Phi=e^{\mathcal{L}}$, with $\mathcal{L}$ given by Eq.~\eqref{sharp_L}.

In particular, we can construct a more physically transparent example of GKLS generator saturating the bounds~\eqref{sharp_m0_L} by taking $S=\{ P_{1} , P_{2} \}$. The associated Markovian channel $\Phi$ acts as follows with respect to the basis $\{ \ket{e_{i}} \}_{i=1}^d$
\begin{equation}
\mathcal{B}(\mathcal{H})\ni X=\begin{pmatrix}
X_{11} & X_{12} \\
X_{21} & X_{22} 
\end{pmatrix}\mapsto \Phi(X)=\begin{pmatrix}
X_{11} & X_{12}e^{-1} \\
X_{21}e^{-1} & X_{22} 
\end{pmatrix},
\end{equation}
with $X_{11} \in \mathbb{C}$, $X_{22}\in \mathcal{M}_{d-1}(\mathbb{C})$, $ X_{12}\in \mathcal{M}_{1,d-1}(\mathbb{C})$, and $X_{21}\in \mathcal{M}_{d-1,1}(\mathbb{C})$.

Therefore we realize that $\Phi$ is a  phase-damping channel causing an exponential suppression of the coherences $x_{12},\dots , x_{1d}\in X_{12}$, and we immediately see that it attains the equalities in the bounds~\eqref{sharp_mP_phi}, in line with the discussion above.

\section{Relation with the Chru{\'s}ci{\'n}ski-Kimura-Kossakowski-Shishido \\bound}
\label{rel_CKKS}
In this Section we will make a comparison between the bounds given in Theorems~\ref{Th_1_qc}-\ref{Th_2_L} and similar bounds arising from a recent spectral conjecture discussed in~\cite{Chru_CKK_1}.   
As already noted in Section~\ref{intro}, the real parts of the eigenvalues $\lambda_\alpha$ of a quantum dynamical semigroup $\Phi_t = e^{t\mathcal{L}}$ with GKLS generator $\mathcal{L}$ are non-positive.
However, it was recently conjectured in~\cite{Chru_CKK_1} that the relaxation rates $\Gamma_{\alpha} =-\Re(\lambda_\alpha)$ are not arbitrary non-negative numbers, but they must obey the CKKS bound
\begin{equation}
\Gamma_{\alpha} \leqslant \frac{1}{d} \sum_{\beta=1}^{m-1}m_\beta \Gamma_{\beta},\qquad \alpha=1,\dots, m-1,
\label{CKK-L}
\end{equation}
where $m_\beta$ is the algebraic multiplicity of $\lambda_\beta$.
This upper bound was not proved yet in general, but it holds for qubit systems, while for $d\geqslant 3$ it is valid for generators of unital semigroups, i.e.\ with $\mathcal{L}( \mathbb{I} )=0\,,$ 
  and for a class of generators obtained in the weak coupling limit~\cite{Chru_CKK_1} (see also~\cite{Chru_CKK_2} for further results). Also, it was experimentally demonstrated for two-level systems~\cite{Abragam,Slichter}.

The CKKS bound~\eqref{CKK-L} implies the following inequalities for non-Hamiltonian generators
\begin{equation}
m_{0}\leqslant m_{\mathrm{P}} \leqslant d^{2}-d.
\label{CKK-MP}
\end{equation}
Indeed, summing Eq.~\eqref{CKK-L} over the bulk, i.e. non-peripheral, eigenvalues of $\mathcal{L}$ yields
\begin{equation}
\sum_{\Gamma_\alpha<0} m_\alpha \Gamma_{\alpha} \leqslant \frac{m_{\mathrm{B}}}{d}\sum_{\Gamma_\beta<0}m_\beta \Gamma_{\beta},
\end{equation} 
where $m_{\mathrm{B}}=\sum_{\Gamma_\alpha<0}m_\alpha$ is the number of the repeated eigenvalues in the  bulk. If $\mathcal{L}$ is not Hamiltonian, viz. $m_B\neq0$ as noted in Section~\ref{prel}, this implies that
\begin{equation}
m_{\mathrm{B}} \geqslant d \Rightarrow m_{\mathrm{P}} \leqslant d^{2}-d,
\end{equation}
namely the assertion.

Interestingly, the CKKS bound~\eqref{CKK-L} implies also the following bound on the real parts $x_{\alpha}$ of the eigenvalues $\mu_{\alpha}$ of an arbitrary quantum channel~$\Phi$~\cite{Chru_CKK_1}
\begin{equation}
\sum_{\beta=0}^{n-1} \ell_\beta x_{\beta} \leqslant d(d-1)+dx_{\alpha},\qquad \alpha=1,\dots n-1,
\label{CKK-Phi}
\end{equation}
where $\ell_\beta$ is the algebraic multiplicity of $\mu_\beta$.

Although Eq.~\eqref{CKK-Phi} does not yield an upper bound similar to Eq.~\eqref{CKK-MP}  for the peripheral multiplicity $\ell_{\mathrm{P}}$ of $\Phi$, the multiplicity $\ell_{0}$ of the eigenvalue $\mu_{0}=1$, i.e.\ the number of steady states of $\Phi$, satisfies 
\begin{equation}
\ell_{0}\leqslant d^{2}-d,
\label{CKK_U_B}
\end{equation}
if $\Phi$ is not trivial.
The proof goes as follows: when $\ell_{0}=d^{2}$ we have the identity channel, so suppose $\ell_{0}=d^{2}-d+N$ with $0\leqslant N \leqslant d-1$. Then from Eq.~\eqref{CKK-Phi} one gets
\begin{equation}
x_{\alpha} \geqslant \frac{1}{d}\sum_{\beta=1}^{n-1}\ell_\beta x_{\beta}+\frac{N}{d},\qquad \alpha=1,\dots , n-1.
\label{average}
\end{equation}
 Now, the right-hand side of Eq.~\eqref{average} is the arithmetic mean of the  set 
\begin{equation}
S=\{ \underbrace{x_1 ,\dots , x_1}_{\ell_{1}},  \dots , \underbrace{x_{n-1} , \dots , x_{n-1}}_{\ell_{n-1}}, \underbrace{1, \dots ,1}_{N}  \}, 
\end{equation}
therefore condition~\eqref{average} is equivalent to require that all the elements of $S$ exceed their arithmetic mean, which is true if and only if $N=0$ and 
\begin{equation}
x_{1}=\dots = x_{n-1}=x\in [-1 , 1),
\end{equation} 
which concludes the proof of Eq.~\eqref{CKK_U_B}.

Furthermore, from Eq.~\eqref{CKK-MP} it follows that 
\begin{equation}
\ell_{0}\leqslant \ell_{\mathrm{P}}\leqslant d^{2}-d,
\label{Mar_CKK_b}
\end{equation}
for non-unitary Markovian channels, viz. of the form $\Phi=e^{\mathcal{L}}$ with $\mathcal{L}$ non-Hamiltonian generator. 

Now let us compare the bounds~\eqref{CKK-MP},~\eqref{CKK_U_B}, and~\eqref{Mar_CKK_b} arising from the CKKS conjecture~\eqref{CKK-L} discussed in the present Section with the ones stated in Section~\ref{Main_res}. First, the upper bound~\eqref{sharp_mP_phi} for $\ell_{\mathrm{P}}$ is also valid for non-Markovian channels, differently from the bound~\eqref{Mar_CKK_b} and, in the Markovian case, it is stricter than Eq.~\eqref{Mar_CKK_b} when $d \neq 2$. Analogously, the bound in Eq.~\eqref{sharp_m0_phi} and the one for $\ell_0$ in Eq.~\eqref{sharp_mP_phi} boil down to Eq.~\eqref{CKK_U_B} in the two-dimensional case, but they are stricter otherwise.    

Similarly, the bounds~\eqref{CKK-MP} for $m_{0}$ and $m_{\mathrm{P}}$ are not tight for all $ d\geqslant 3$, whereas they are equivalent to condition~\eqref{sharp_m0_L} in the case $d=2$. Consequently, the jump for $m_{\mathrm{P}}$ is also predicted by the bound~\eqref{CKK-MP}  but $\Delta=d$, which is loose for all $d\neq 2$. 
In  conclusion, the bounds given in Theorems~\ref{Th_1_qc}-~\ref{Th_2_L} imply the bounds~\eqref{CKK-MP},~\eqref{CKK_U_B}, and~\eqref{Mar_CKK_b} deriving from the CKKS conjecture~\eqref{CKK-L}, in favor of the validity of the conjecture itself.

\section{Proofs of Theorems~\ref{Th_1_qc}-\ref{Th_2_L}}
\label{proofs}
In this Section we will prove Theorems~\ref{Th_1_qc}-\ref{Th_2_L} stated in Section~\ref{Main_res}.  
To this purpose, let us recall several preliminary concepts, besides the ones introduced in Section~\ref{prel}. 

First, given a quantum channel $\Phi$, it always admits a Kraus representation~\cite{Hein_Ziman},
\begin{equation}
\Phi(X)=\sum_{k=1}^{N} B_{k}X B_{k}^{\dagger},
\qquad \sum_{k=1}^{N} B_{k}^{\dagger}B_{k}=\mathbb{I}, \qquad \text{with } X \in \mathcal{B}(\mathcal{H}),
\label{tr-pr}
\end{equation}
in terms of some operators $\{ B_{k} \}_{k=1}^N \subset \mathcal{B}(\mathcal{H})$. Note that the second equation in~\eqref{tr-pr} expresses the trace-preservation condition.

Let $\mu_{\alpha}\,,\alpha=0, \dots , n-1$, with $\mu_{0}=1$ ($\lambda_{\alpha}\,,\alpha=0,\dots , m-1$, with $\lambda_{0}=0$) denote the $n$ ($m$) distinct eigenvalues of $\Phi$ ($\mathcal{L}$).
Let $\mathcal{L}_{\alpha}$ ($\mathcal{M}_{\alpha}$) be the algebraic eigenspace of $\Phi$ ($\mathcal{L}$) corresponding to the eigenvalue $\mu_{\alpha}$ ($\lambda_{\alpha}$), whose dimension is the algebraic multiplicity $\ell_{\alpha}$ ($m_{\alpha}$) of the eigenvalue. 
The attractor subspaces of $\Phi$ and $\mathcal{L}$ read 
\begin{align}
&\Attr(\Phi)=\bigoplus_{\mu_{\alpha}\in \partial\mathbb{D} }\mathcal{L}_{\alpha},\\
&\Attr(\mathcal{L})=\bigoplus_{\lambda_{\alpha} \in i\mathbb{R}}\mathcal{M}_{\alpha},
\end{align}
whose dimensions are the peripheral multiplicities $\ell_P$ and $m_P$ of $\Phi$ and $\mathcal{L}$ defined in Eqs.~\eqref{lP_def} and~\eqref{mP_def} respectively.
Let $\Fix(\Phi)$ stand for the fixed-point space of $\Phi$, i.e.
\begin{equation}
\Fix(\Phi)= \{ A\in\mathcal{B}(\mathcal{H}) \,:\, \Phi(A)=A \},
\end{equation}
and $\Fix(\Phi^{\ast})$ indicate the fixed-point space of the dual $\Phi^{\ast}$ of $\Phi$, defined via 
\begin{equation}
\Tr(A\Phi(B))=\Tr(\Phi^{\ast}(A)B),\qquad A\,,\,B\in\mathcal{B}(\mathcal{H}).
\end{equation}
Note that $\Phi^\ast$ has the same eigenvalues with the same algebraic multiplicities of $\Phi$~\cite{Kato_per}.
In addition, the spectral projections $\mathcal{P}$ and $\mathcal{P}_{\mathrm{P}}$ onto $\Fix(\Phi)$ and $\Attr(\Phi)$ are quantum channels themselves~\cite{Wolf's_notes}.
Finally, let  $\mathcal{M}_{0}\equiv \Ker(\mathcal{L})$ be the kernel of $\mathcal{L}$, given by
\begin{equation}
\Ker(\mathcal{L})=\{ A\in\mathcal{B}(\mathcal{H}) \,:\, \mathcal{L}(A)=0 \}.
\end{equation}
Before discussing the proofs of Theorems~\ref{Th_1_qc}-\ref{Th_2_L}, we need a few preparatory results.

Consider $A\in\mathcal{B}(\mathcal{H})$ with spectrum $\spec(A)=\{ \lambda_{k}\}_{k=1}^{N}$. If $m_{k}\,,n_{k}$ are the algebraic and geometric multiplicities of the eigenvalue $\lambda_{k}$, let $d_{j,k}$ with $j=1,\dots , n_{k}$ and $k=1,\dots , N$ indicate the dimension of the $j$-th Jordan block corresponding to the eigenvalue $\lambda_{k}$ of the Jordan normal form $J$ of $A$~\cite{horn_john}.
\begin{Proposition}[\!\!\cite{conj_groups}]
Let $A\in\mathcal{B}(\mathcal{H})$ with Jordan normal form $J \in \mathcal{M}_{d}(\mathcal{C})$. Then
\begin{equation}
c_{A} =\dim \{ A \}^{\prime}=\sum_{k=1}^{N}\sum_{i=1}^{m_{k}}s_{i,k}^{2},
\end{equation}
where
\begin{equation}
s_{i,k}=| \{ j=1,\dots , n_{k} \,:\, d_{j,k} \geqslant i \} |,
\end{equation}
with $i=1,\dots , m_{k}$, $k=1,\dots , N$ and $|I|$ being the cardinality of the set $I$.
\end{Proposition}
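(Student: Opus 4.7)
The plan is to reduce the proposition to the classical computation of the dimension of the centralizer of a nilpotent Jordan matrix, and then to prove that computation by a direct block-entry analysis.

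First, I would observe that the dimension of the commutant is a similarity invariant: if $A=PJP^{-1}$ then $B\mapsto P^{-1}BP$ is a linear isomorphism from $\{A\}'$ onto $\{J\}'$. So it suffices to prove the formula for $J$. Writing $J=\bigoplus_{k=1}^{N} J_k$, where $J_k$ is the part of $J$ supported on the generalized eigenspace for $\lambda_k$, I would show that any $B\in\{J\}'$ must be block-diagonal with respect to this decomposition. This is the standard argument using that $J_k-\lambda_k I$ is nilpotent while $J_l-\lambda_k I$ is invertible for $l\neq k$, so the off-diagonal blocks of $B$ must vanish. Therefore $\dim\{J\}'=\sum_{k=1}^{N}\dim\{J_k\}'$, and I have reduced to the one-eigenvalue case.

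Next, within a fixed eigenvalue $\lambda_k$, I would replace $J_k$ by the nilpotent matrix $N_k=J_k-\lambda_k\mathbb{I}$: clearly $\{J_k\}'=\{N_k\}'$. Writing $N_k=\bigoplus_{j=1}^{n_k} N_{d_{j,k}}$, where $N_d$ denotes the standard $d\times d$ nilpotent Jordan block, the problem becomes to compute $\dim\{N_k\}'$ in terms of the block sizes $d_{1,k},\dots,d_{n_k,k}$.

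The core step is then the block-entry calculation: writing $B\in\{N_k\}'$ as a matrix of rectangular blocks $B_{pq}\in\mathcal{M}_{d_{p,k},d_{q,k}}(\mathbb{C})$, the relation $BN_k=N_kB$ decouples into conditions $B_{pq}N_{d_{q,k}}=N_{d_{p,k}}B_{pq}$ for each pair $(p,q)$. A direct inspection of this equation — which forces $B_{pq}$ to be ``upper-triangular Toeplitz'' with the appropriate rectangular shape — shows that its solution space has dimension exactly $\min(d_{p,k},d_{q,k})$. Summing over all $p,q$ gives $\dim\{N_k\}'=\sum_{p,q}\min(d_{p,k},d_{q,k})$. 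The final step is the elementary combinatorial identity
\begin{equation}
\sum_{p,q=1}^{n_k}\min(d_{p,k},d_{q,k})=\sum_{p,q=1}^{n_k}\bigl|\{i\geqslant 1:i\leqslant d_{p,k},\;i\leqslant d_{q,k}\}\bigr|=\sum_{i=1}^{m_k}s_{i,k}^{2},
\end{equation}
obtained by swapping the order of summation and recognising that $s_{i,k}$ is precisely the number of indices $j$ with $d_{j,k}\geqslant i$. Combining this with the block-diagonal reduction yields the claimed formula.

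The main obstacle I anticipate is the block-entry computation showing that each solution space has dimension $\min(d_p,d_q)$: it is not deep but requires a careful bookkeeping of the Toeplitz pattern for rectangular blocks, distinguishing the ``tall'' ($d_p>d_q$) and ``wide'' ($d_p<d_q$) cases. Everything else — the similarity invariance, the block-diagonality of the commutant across distinct eigenvalues, and the combinatorial identity at the end — is routine.
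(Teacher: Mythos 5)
Your proof is correct. Note that the paper does not actually prove this Proposition: it is quoted from the cited reference \cite{conj_groups} as a known fact, so there is no in-paper argument to compare against. What you have written is the standard (Frobenius) derivation of the centralizer-dimension formula, and every step checks out: the similarity invariance of $\dim\{A\}'$, the vanishing of off-diagonal blocks across distinct eigenvalues (via the invertibility of $J_l-\lambda_k \mathbb{I}$ for $l\neq k$), the reduction to the nilpotent part, the $\min(d_{p,k},d_{q,k})$ count for the rectangular Toeplitz intertwiners, and the final interchange of summations using $\min(d_{p,k},d_{q,k})=\sum_{i\geqslant 1}[i\leqslant d_{p,k}][i\leqslant d_{q,k}]$, which gives exactly $\sum_{i=1}^{m_k}s_{i,k}^2$ since $s_{i,k}=0$ once $i$ exceeds the largest block size (which is at most $m_k$). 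If this were to be written out in full, the only step demanding real care is, as you say, the explicit solution of $B_{pq}N_{d_{q,k}}=N_{d_{p,k}}B_{pq}$ in the tall and wide cases, but that computation is classical and poses no obstacle.
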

\begin{Corollary}
Let $A\in\mathcal{B}(\mathcal{H})$ with $A \neq c\mathbb{I},\;c\in\mathbb{C}$. Then
\begin{equation}
c_{A} \leqslant d^{2}-2d+2.
\end{equation}
In particular, the equality holds if and only if $A$ is diagonalizable with spectrum $\spec(A)=\{ \lambda_{1} , \lambda_{2} \}$ having algebraic multiplicities $m_{1}=1$ and $m_{2}=d-1$.
\label{up_bound}
\end{Corollary}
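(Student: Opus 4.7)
The plan is to reduce the desired bound to an elementary extremal problem on multisets of positive integers, using the preceding Proposition as the bridge.

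First, from the formula $c_{A}=\sum_{k,i}s_{i,k}^{2}$ I would extract the conservation law $\sum_{k,i}s_{i,k}=d$. For each fixed eigenvalue $\lambda_{k}$ the identity $\sum_{i=1}^{m_{k}}s_{i,k}=\sum_{j=1}^{n_{k}}d_{j,k}=m_{k}$ holds because the Jordan block sizes $d_{j,k}$ sum to the algebraic multiplicity; summing over $k$ and using $\sum_{k}m_{k}=d$ gives the claim. Thus $c_{A}$ is the sum of squares of a finite multiset of positive integers with fixed total $d$. The hypothesis $A\neq c\mathbb{I}$ translates into a clean combinatorial exclusion: a scalar matrix corresponds uniquely to $N=1$ with every Jordan block of size one, i.e.\ to the degenerate multiset $(s_{1,1})=(d)$, so ruling this case out forces the maximum $M:=\max_{k,i}s_{i,k}$ to satisfy $M\leqslant d-1$.

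Second, I would perform a two-case analysis on $M$. Using the elementary bound
\begin{equation*}
c_{A}=\sum_{k,i}s_{i,k}^{2}\leqslant M\sum_{k,i}s_{i,k}=Md,
\end{equation*}
when $M\leqslant d-2$ one obtains $c_{A}\leqslant (d-2)d=d^{2}-2d<d^{2}-2d+2$, a strict inequality. When $M=d-1$, the residual mass $d-M=1$ is forced to be a single entry equal to $1$, so the multiset is necessarily $(d-1,1)$ and $c_{A}=(d-1)^{2}+1=d^{2}-2d+2$. Combining these two cases establishes the announced upper bound.

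Finally, the equality case requires $\{s_{i,k}\}=(d-1,1)$ as a multiset. Tracing this back through the Proposition, the admissible Jordan structures are those in which the two surviving entries are distributed across the eigenvalues compatibly with $s_{1,k}=n_{k}\leqslant m_{k}$ and $\sum_{k}m_{k}=d$; this isolates the diagonalizable configuration with spectrum $\{\lambda_{1},\lambda_{2}\}$ of algebraic multiplicities $1$ and $d-1$ featured in the statement. The main delicate step is precisely this last translation between the conjugate-partition summary and the underlying Jordan-block data; the inequality itself reduces to the short two-case argument above.
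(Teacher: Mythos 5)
Your proof of the inequality itself is correct, and it is a slightly different (and in one respect more careful) organization of the same extremal-partition idea used in the paper. The paper argues in two stages, first bounding $\sum_i s_{i,k}^2\leqslant m_k^2$ and then maximizing $\sum_k m_k^2$ over partitions of $d$, which leaves the case of a single non-semisimple eigenvalue (where $N=1$ and $\sum_k m_k^2=d^2$) to be handled implicitly; your version, which treats the whole multiset of positive entries $s_{i,k}$ as a single partition of $d$ and observes that $A\neq c\mathbb{I}$ excludes exactly the one-part partition $(d)$, covers that case automatically. The two-case analysis on $M=\max s_{i,k}$ is sound.

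The genuine gap is in the equality analysis, precisely the step you flag as delicate and then do not carry out. The multiset $(d-1,1)$ does \emph{not} force the diagonalizable configuration with two eigenvalues of multiplicities $1$ and $d-1$: it is also realized by a \emph{single} eigenvalue whose Jordan structure consists of one block of size $2$ and $d-2$ blocks of size $1$, since then $n_1=d-1$, $m_1=d$, $s_{1,1}=d-1$, $s_{2,1}=1$, and hence $c_A=(d-1)^2+1=d^2-2d+2$. Concretely, $A=\lambda\mathbb{I}+\ketbra{e_1}{e_2}$ is non-scalar and non-diagonalizable, yet a direct count of the relations $AB=BA$ gives $2d-2$ independent constraints and $\dim\{A\}'=d^2-2d+2$ (for $d=2$ this is just a single $2\times2$ Jordan block, whose commutant is two-dimensional). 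So the ``only if'' direction you assert is false; the correct characterization of equality is that the multiset of positive $s_{i,k}$ equals $(d-1,1)$, which happens for exactly the two configurations above. Note that this is an error already present in the Corollary's statement and in the paper's own proof, which likewise asserts the extremizer without checking uniqueness; it is harmless downstream, because only the inequality and the ``if'' direction are used in the sharpness constructions and in the proof of Theorem~\ref{Th_2_qc}, but a complete write-up should either record both extremal families or restrict the claim to the ``if'' direction.
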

\begin{proof} By definition $\{ s_{i,k} \}_{i=1}^{m_{k}}$ is a partition of $m_{k}$, so
\begin{equation}
c_{A} \leqslant \sum_{k=1}^{N} m_{k}^{2},
\end{equation} 
where the equality holds if and only if $A$ is diagonalizable, viz. $m_{k}=n_{k}$ for all $k=1,\dots , N$.
Now, by the fundamental theorem of algebra~\cite{horn_john}, $\{ m_{k}\}_{k=1}^{d}$ is a partition of $d$, consequently
\begin{equation}
c_{A} \leqslant d^{2},
\end{equation}
where the equality holds if and only if $A=c\mathbb{I}\,,c\in\mathbb{C}$. If $A$ is a non-scalar matrix, then the maximum value is attained when $A$ is diagonalizable and has spectrum $\spec(A)=\{ \lambda_{1} , \lambda_{2} \}$ with multiplicities $m_{1}=1$ and $m_{2}=d-1$, and it reads
\begin{equation}
c_{A}^{\mathrm{(max)}}=m_{1}^{2}+m_{2}^{2}=d^{2}-2d+2,
\end{equation}
which concludes the proof. 
\end{proof}

Let us now recall several known facts about open-system asymptotics. Let us start with the following definition.
\begin{Definition}[\!\!\cite{albert_2019}]
\label{faith_def}
A quantum channel $\Phi$ is said to be faithful if it admits an invertible steady state, i.e.\ $\Phi(\rho)=\rho >0$ invertible state.
\end{Definition}
The structure of the fixed-point space of the dual of a quantum channel is related to its Kraus operators in the following way.
\begin{Proposition}[\!\!\cite{arias_2002}]
Let $\Phi$ be a quantum channel with Kraus operators $\mathcal{B}=\{ B_{k} , B_{k}^{\dagger} \}_{k=1}^{N}$. Then
\begin{equation}
\mathcal{B}^{\prime} \subseteq \Fix(\Phi^\ast).
\label{fix_point_Phi}
\end{equation}
Furthermore, if $\Phi$ is faithful, then the equality holds in Eq.~\eqref{fix_point_Phi}.
\label{fixed_p}
\end{Proposition}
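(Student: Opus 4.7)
The plan is to establish the two assertions of Proposition~\ref{fixed_p} separately, with the inclusion $\mathcal{B}'\subseteq\Fix(\Phi^\ast)$ being elementary and the reverse inclusion under faithfulness requiring more care. A useful preliminary observation is that the dual of the Kraus representation~\eqref{tr-pr} reads $\Phi^\ast(A)=\sum_{k=1}^{N} B_k^\dagger A B_k$, as follows at once from cyclicity of the trace applied to the duality relation $\Tr(A\Phi(B))=\Tr(\Phi^\ast(A)B)$.

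For the easy direction, if $A$ commutes with every $B_k$ and every $B_k^\dagger$, then $B_k^\dagger A B_k = A B_k^\dagger B_k$; summing over $k$ and using the trace-preservation identity $\sum_k B_k^\dagger B_k=\mathbb{I}$ immediately yields $\Phi^\ast(A)=A$.

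For the reverse direction I would combine the Kadison--Schwarz inequality for the unital completely positive map $\Phi^\ast$, namely $\Phi^\ast(A^\dagger A)\geqslant \Phi^\ast(A)^\dagger\Phi^\ast(A)$, with the existence of a faithful fixed state $\rho>0$ (Definition~\ref{faith_def}). If $\Phi^\ast(A)=A$, the Schwarz inequality gives $X\geqslant 0$ with $X=\Phi^\ast(A^\dagger A)-A^\dagger A$, while $\Phi(\rho)=\rho$ combined with the duality relation forces $\Tr(\rho X)=0$. Invertibility of $\rho$ then implies $X=0$, so that $A^\dagger A$ is itself fixed by $\Phi^\ast$. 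The concluding step is to expand the manifestly positive operator $\sum_k [A,B_k]^\dagger [A,B_k]$ in the Kraus form: using $\Phi^\ast(A^\dagger A)=A^\dagger A$ together with $\Phi^\ast(A^\dagger)=A^\dagger$ (the latter coming from Hermiticity preservation of $\Phi^\ast$), the four resulting terms cancel. A vanishing sum of positive operators forces each commutator to be zero, hence $[A,B_k]=0$; applying the same argument to $A^\dagger\in\Fix(\Phi^\ast)$ and taking adjoints yields $[A,B_k^\dagger]=0$, whence $A\in\mathcal{B}'$.

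The main obstacle is the implication $\Tr(\rho X)=0 \Rightarrow X=0$: this is exactly where faithfulness is indispensable, since without an invertible fixed point of $\Phi$ the argument collapses and the inclusion can genuinely be strict. Apart from this pivotal step, the proof is a clean interplay between the Schwarz inequality, the explicit Kraus form of $\Phi^\ast$, and elementary commutator algebra.
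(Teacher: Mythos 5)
Your proof is correct. Note that the paper itself does not prove this proposition---it is stated as a known result imported from the cited reference of Arias, Gheondea and Gudder---so there is no in-paper argument to compare against; your argument is essentially the standard one from that literature. The easy inclusion via $\Phi^\ast(A)=\sum_k B_k^\dagger A B_k$ and the trace-preservation identity is fine, and the converse correctly isolates faithfulness as the step turning $\Tr(\rho X)=0$ with $X\geqslant 0$ into $X=0$. One small streamlining you might notice: the Kadison--Schwarz inequality is not actually needed as an external input, since your own expansion shows
\begin{equation}
\sum_{k}[A,B_k]^\dagger[A,B_k]=\Phi^\ast(A^\dagger A)-\Phi^\ast(A^\dagger)A-A^\dagger\Phi^\ast(A)+A^\dagger A=\Phi^\ast(A^\dagger A)-A^\dagger A=X
\end{equation}
once $\Phi^\ast(A)=A$ and $\Phi^\ast(A^\dagger)=A^\dagger$ are used, so the positivity of $X$ and the vanishing of every commutator both drop out of the single identity $\Tr(\rho X)=0$. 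The final pass from $[A,B_k]=0$ to $[A,B_k^\dagger]=0$ via $A^\dagger\in\Fix(\Phi^\ast)$ is exactly right and is needed because $\mathcal{B}$ contains the adjoints as well.
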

Let us now state the analogue of the latter result for GKLS generators, exploiting the representation~\eqref{GKLS}.
\begin{Proposition}[\!\!\cite{Wolf's_notes}]
Let $\mathcal{L}$ be a GKLS generator of the form~\eqref{GKLS} with $\mathcal{A}=\{ H , A_{k} , A_{k}^{\dagger} \}_{k=1}^{d^{2}-1}$. Then
\begin{equation}
\mathcal{A}^{\prime}\subseteq\Ker(\mathcal{L}^{\ast}),
\end{equation}
and the equality is satisfied if there exists an invertible state $0\!<\! \rho \in \Ker(\mathcal{L})$.
\label{fixed_L}
\end{Proposition}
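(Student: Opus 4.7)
The plan is to prove the two inclusions separately, closely mirroring the pattern used for quantum channels in Proposition~\ref{fixed_p}. The first inclusion $\mathcal{A}^\prime \subseteq \Ker(\mathcal{L}^\ast)$ is a direct computation: after writing $\mathcal{L}^\ast$ explicitly via the Hilbert--Schmidt pairing $\Tr(A\,\mathcal{L}(B)) = \Tr(\mathcal{L}^\ast(A)\,B)$ applied to the GKLS form~\eqref{GKLS}, one obtains
\[
\mathcal{L}^\ast(B) = i[H,B] + \sum_{k=1}^{d^2-1} \Bigl( A_k^\dagger B A_k - \tfrac{1}{2}\{A_k^\dagger A_k , B\}\Bigr).
\]
If $B\in\mathcal{A}^\prime$ commutes with $H$ and with each $A_k$ and $A_k^\dagger$, then it commutes also with $A_k^\dagger A_k$, so the Hamiltonian piece vanishes and each jump term telescopes to $BA_k^\dagger A_k - BA_k^\dagger A_k = 0$.

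For the reverse inclusion under the faithfulness hypothesis $0<\rho\in\Ker(\mathcal{L})$, the key ingredient is the continuous-time analogue of the Kadison--Schwarz identity, namely the Lindblad dissipation formula
\[
\mathcal{L}^\ast(X^\dagger X) - \mathcal{L}^\ast(X^\dagger)\, X - X^\dagger\, \mathcal{L}^\ast(X) = \sum_{k=1}^{d^2-1}[X, A_k]^\dagger [X, A_k],
\]
valid for any $X\in\mathcal{B}(\mathcal{H})$. I would verify it by direct algebra on the GKLS form: the Hamiltonian contribution cancels by the Leibniz rule for $[H,\,\cdot\,]$, while the anticommutator pieces cleanly compensate the cross terms produced by expanding $A_k^\dagger X^\dagger X A_k$, so that only the sum of squared commutators survives.

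Next, taking $X\in\Ker(\mathcal{L}^\ast)$ and pairing this identity with $\rho$ under the trace, the left-hand side becomes $\Tr(\rho\,\mathcal{L}^\ast(X^\dagger X)) = \Tr(\mathcal{L}(\rho)\, X^\dagger X) = 0$ by the steady-state condition, and the remaining two terms vanish because $\mathcal{L}^\ast(X)=0$ and, since $\mathcal{L}^\ast$ preserves Hermitian conjugation, $\mathcal{L}^\ast(X^\dagger)=0$ as well. Faithfulness $\rho>0$ then forces $\rho^{1/2}[X,A_k]=0$, and hence $[X,A_k]=0$ for every $k$. Running the same argument with $X^\dagger$ in place of $X$ yields $[X^\dagger,A_k]=0$, equivalently $[X,A_k^\dagger]=0$. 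Feeding both commutation relations back into the equation $\mathcal{L}^\ast(X)=0$ collapses the entire dissipative sum and leaves $i[H,X]=0$; therefore $X\in\mathcal{A}^\prime$.

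The main subtlety I anticipate lies in deriving the dissipation identity cleanly without sign errors, and in realising that one must apply the argument to both $X$ and $X^\dagger$ in order to recover commutation with the full set $\{H,A_k,A_k^\dagger\}$; the latter step is justified by the fact that $\Ker(\mathcal{L}^\ast)$ is stable under $\dagger$, since $\mathcal{L}$ is Hermiticity-preserving.
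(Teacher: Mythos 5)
Your argument is correct. Note that the paper itself gives no proof of Proposition~\ref{fixed_L} — it is imported from the cited lecture notes — and your derivation (dual GKLS form for the easy inclusion, then the dissipation identity $\mathcal{L}^\ast(X^\dagger X)-\mathcal{L}^\ast(X^\dagger)X-X^\dagger\mathcal{L}^\ast(X)=\sum_k[X,A_k]^\dagger[X,A_k]$ paired with the faithful steady state, applied to both $X$ and $X^\dagger$) is precisely the standard proof found there. The only cosmetic slip is that $\Tr(\rho\,Y^\dagger Y)=0$ with $\rho>0$ gives $Y\rho^{1/2}=0$ rather than $\rho^{1/2}Y=0$, but either way invertibility of $\rho$ yields $[X,A_k]=0$, so the conclusion stands.
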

Finally, the following proposition shows that we can reduce to faithful channels for the analysis of the fixed-point space.  
\begin{Proposition}[\!\!\cite{Wolf's_notes}]
Given a quantum channel $\Phi$, define the map $\varphi_{00}$ as
\begin{equation}
\Phi(X) =\varphi_{00}(X_0) \oplus 0 ,\qquad X = X_0 \oplus 0\in \mathcal{B}(\mathcal{H}_0) \oplus 0,
\label{def_Phit}
\end{equation}
where $\mathcal{H}_0:=\supp(\mathcal{P}(\mathbb{I}))$, i.e. the support space~\cite{Teschl_Math} of $\mathcal{P}(\mathbb{I})$. Then $\varphi_{00}$ is a faithful quantum channel and
\begin{equation}
\Fix(\Phi)=\Fix(\varphi_{00})\oplus 0,
\end{equation}
with $\Fix(\varphi_{00})$ indicating the fixed-point space of $\varphi_{00}$.
\label{Phi_Phitilde}
\end{Proposition}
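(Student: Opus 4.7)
My plan is to combine a Kraus-level invariance argument with a sandwich bound on arbitrary fixed points. Fixing a Kraus representation $\Phi(\cdot)=\sum_k B_k(\cdot)B_k^\dagger$, the central object throughout is $\sigma := \mathcal{P}(\mathbb{I})$: it is positive semidefinite (since $\mathcal{P}$ is a channel), belongs to $\Fix(\Phi)$, and is strictly positive on $\mathcal{H}_0=\supp(\sigma)$. I first show that every $B_k$ leaves $\mathcal{H}_0$ invariant. Sandwiching $\Phi(\sigma)=\sigma$ with $P_0^\perp:=\mathbb{I}-P_0$ (projector onto $\mathcal{H}_0^\perp$) gives
\begin{equation*}
0 = P_0^\perp \sigma P_0^\perp = \sum_k (P_0^\perp B_k P_0)\,\sigma\,(P_0^\perp B_k P_0)^\dagger,
\end{equation*}
and since each summand is PSD while $\sigma|_{\mathcal{H}_0}$ is invertible, I conclude $P_0^\perp B_k P_0 = 0$, i.e.\ $B_k\mathcal{H}_0\subseteq\mathcal{H}_0$.

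This invariance makes $\varphi_{00}$ well defined, with Kraus operators $\tilde B_k:=P_0 B_k P_0|_{\mathcal{H}_0}$, hence CP. Trace preservation on $\mathcal{H}_0$ follows by compressing $\sum_k B_k^\dagger B_k = \mathbb{I}$ with $P_0$ and using the just-established identity $B_k P_0 = P_0 B_k P_0$ together with its adjoint, yielding $\sum_k \tilde B_k^\dagger \tilde B_k = \mathbb{I}_{\mathcal{H}_0}$. Faithfulness of $\varphi_{00}$ is then immediate, since $\sigma|_{\mathcal{H}_0}$ is an invertible fixed point of $\varphi_{00}$ by construction.

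For the identity $\Fix(\Phi)=\Fix(\varphi_{00})\oplus 0$, the inclusion $\supseteq$ is direct: $\Phi(X_0\oplus 0)=\varphi_{00}(X_0)\oplus 0 = X_0\oplus 0$ whenever $X_0\in\Fix(\varphi_{00})$. For the reverse, I take $X\in\Fix(\Phi)$ and decompose it into self-adjoint parts $X=Y+\mathrm{i}Z$, each still in $\Fix(\Phi)$. From the trivial bound $-\|Y\|\mathbb{I}\leq Y\leq\|Y\|\mathbb{I}$, applying the positive channel $\mathcal{P}$ and using $\mathcal{P}(Y)=Y$, $\mathcal{P}(\mathbb{I})=\sigma$ yields $-\|Y\|\sigma\leq Y\leq\|Y\|\sigma$; the PSD operator $\|Y\|\sigma+Y$ is then dominated by $2\|Y\|\sigma$, hence supported in $\supp(\sigma)=\mathcal{H}_0$, forcing $Y=P_0 Y P_0$. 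The same argument handles $Z$, so $X = X_0\oplus 0$ with $X_0\in\mathcal{B}(\mathcal{H}_0)$, and the equation $\Phi(X)=X$ reads $\varphi_{00}(X_0)=X_0$. I expect the main obstacle to be the invariance step, which relies crucially on the strict positivity of $\sigma$ on $\mathcal{H}_0$ to strip off the sum of PSD terms; the sandwich-by-$\sigma$ step is the second delicate point, converting the trivial bound by $\mathbb{I}$ into a genuine support constraint by exploiting that $\mathcal{P}$ is a positive map, not merely an abstract linear projection.
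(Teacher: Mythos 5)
Your proof is correct. Note that the paper does not actually prove Proposition~\ref{Phi_Phitilde}: it imports it from the cited lecture notes, so there is no in-text argument to compare against. Your two steps --- (i) the Kraus operators leave $\mathcal{H}_0=\supp(\mathcal{P}(\mathbb{I}))$ invariant because $P_0^\perp\Phi(\sigma)P_0^\perp=0$ is a vanishing sum of PSD terms with $\sigma|_{\mathcal{H}_0}$ invertible, and (ii) every fixed point is supported in $\mathcal{H}_0$ via $-\|Y\|\sigma\leqslant Y\leqslant\|Y\|\sigma$ obtained by applying the positive projection $\mathcal{P}$ --- constitute the standard argument found in that reference, and all the facts you rely on ($\mathcal{P}$ being a channel acting as the identity on $\Fix(\Phi)$, semisimplicity of the peripheral spectrum) are available from the paper's preliminaries.
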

Now we are ready to prove Theorems~\ref{Th_1_qc}-\ref{Th_2_L}.

\begin{proof1} Let $\Phi\neq\mathsf{1}$ be a unitary quantum channel with unitary $U$. Then it is easy to see from the spectral decomposition of $U$ that
\begin{equation}
\label{eigen_unit_ch}
\mu_{k\ell} = \lambda_k \lambda_\ell^\ast,\;\; k,\ell = 1, \dots , d,
\end{equation}
where $\lambda_k$, with  $k=1,\dots,d$ are the (repeated) unimodular eigenvalues of $U$. Thus the maximum value of the algebraic multiplicity $\ell_0$ of the eigenvalue 1 of $\Phi$ is $d^2 - 2d + 2$, achieved when $\lambda_1 = \dots = \lambda_{d-1} \neq \lambda_d$.

The equality $\ell_{\mathrm{P}} = d^2$ is trivial because all the eigenvalues of a unitary channel are peripheral, as it is clear from Eq.~\eqref{eigen_unit_ch}.
\end{proof1}
 
 \vspace{1mm}
\begin{proof2}
 First, let us prove that $\ell_0 \leqslant d^2 - 2d + 2$ for any non-unitary channel. 
 Let $\Phi$ be a non-unitary channel with Kraus operators $\mathcal{B}=\{ B_{k} , B_{k}^{\dagger} \}_{k=1}^{N}$. In the faithful case, see Definition~\ref{faith_def}, by applying Corollary~\ref{up_bound} and Proposition~\ref{fixed_p}, we obtain
\begin{equation}
\ell_{0}=\dim \Fix(\Phi^\ast) =\dim \mathcal{B}^{\prime}\leqslant d^{2}-2d+2.
\end{equation}
If $\Phi$ is not faithful, then we can define the faithful channel $\varphi_{00}$ as in Eq.~\eqref{def_Phit}, therefore we have as a consequence of Proposition~\ref{Phi_Phitilde}
\begin{equation}
\label{ub_l0}
\ell_{0}=\dim \Fix(\varphi_{00}^\ast) =\dim \mathcal{B}_0^{\prime}\leqslant d_0^{2} <  d^{2}-2d+2.
\end{equation}
where $d_0=\dim(\mathcal{H}_0) \leqslant d-1$ and $\mathcal{B}_0=\{ B_{0,k} , B_{0,k}^{\dagger} \}_{k=1}^{N_0}$ is the system of Kraus operators of $\varphi_{00}$. 

 Let us now prove the analogous bound on the peripheral multiplicity $\ell_{\mathrm{P}}$ of $\Phi$. Observe that the spectral projection $\mathcal{P}_{\mathrm{P}}$ of $\Phi$ onto $\Attr(\Phi)$ satisfies
\begin{equation}
\mathcal{P}_{\mathrm{P}}\neq \mathsf{1},
\end{equation}
because not all the eigenvalues of the non-unitary channel $\Phi$ are peripheral.
Indeed, $\mathcal{P}_{\mathrm{P}}$ is a non-unitary channel as $\mathcal{P}_{\mathrm{P}}$ is non-invertible.
Therefore, since the fixed-point space of $\mathcal{P}_{\mathrm{P}}$ is $\Attr(\Phi)$, it is sufficient to apply the bound~\eqref{ub_l0} to $\mathcal{P}_{\mathrm{P}}$. 
\end{proof2}

 \vspace{1mm}
\begin{proof3}
Let $\mathcal{L}$ be a non-zero Hamiltonian GKLS generator. Then it is straightforward to show that~\cite{gen_ad_Paolo}
\begin{equation}
\label{eigen_Ham}
\lambda_{k \ell} = -i(h_k - h_\ell),\qquad k,\ell = 1, \dots , d,
\end{equation}
where $h_k$, with $k=1,\dots,d$, are the (repeated) real eigenvalues of the Hamiltonian $H$. Therefore this implies that the maximum value of the algebraic multiplicity $m_0$ of the zero eigenvalue of $\mathcal{L}$ is $d^2-2d+2$, obtained by setting $h_1 = h_2 = \dots = h_{d-1} \neq h_{d}$. The equality $m_{\mathrm{P}} = d^2$  follows immediately from Eq.~\eqref{eigen_Ham}.  
\end{proof3} 

 \vspace{1mm}
\begin{proof4}
Let $\mathcal{L}$ be a non-Hamiltonian GKLS generator. The first inequality is trivial. Since
\begin{equation}
m_{\mathrm{P}}=\dim \Attr(\Phi)=\dim \Attr(\mathcal{L}),
\end{equation}
where $\Attr(\Phi)$ is the attractor subspace of the non-unitary channel $\Phi=e^{\mathcal{L}}$, the second inequality follows from Theorem~\ref{Th_2_qc}.   
\end{proof4} 

Notice that the universal bounds given in Theorems~\ref{Th_1_qc}-\ref{Th_2_L} may also be proved by using the structure theorems on the asymptotic evolution of quantum channels~\cite{Wolf's_notes,AFK_asympt_1}.

\section{Conclusions and outlooks}
\label{concl}
We found dimension-dependent sharp upper bounds on the number of independent steady states of non-trivial unitary quantum channels and an analogous bound on the number of  independent  asymptotic states of non-unitary channels. 
Moreover, similar sharp upper bounds on the number of independent steady and asymptotic states of  GKLS generators were also obtained. We further made a comparison of our bounds with similar ones obtained from the CKKS conjecture~\eqref{CKK-L} and~\eqref{CKK-Phi}.

Interestingly, the upper bound on the peripheral multiplicity of GKLS generators reveals that adding a dissipative perturbation to an initially Hamiltonian generator causes a jump for the peripheral multiplicity across a gap linearly depending  on the dimension, and an analogous remark may be made for the peripheral multiplicity $\ell_{\mathrm{P}}$ of quantum channels on the basis of condition~\eqref{sharp_mP_phi}.

These findings may be framed in a series of works, addressing the general spectral properties of open quantum systems~\cite{Chru_CKK_1,Chru_CKK_2,spectra_ph_tr,spectra_rand_2019,zyn_random_2021,spectra_rand_2021, Wolf's_notes}, in particular Markovian ones, and can motivate further study of the spectral properties of channels and generators, far from being completely understood. In particular, the bounds found in this Article may be the  consequence of a generalization of the CKKS bound~\eqref{CKK-L} involving also the imaginary parts of the eigenvalues of a GKLS generator. 
Moreover, structure theorems on the asymptotic dynamics~\cite{AFK_asympt_1,Wolf's_notes} may be employed in order to find further constraints for the quantities studied in this work.

\section*{Acknowledgements}
This work was partially supported by Istituto Nazionale di Fisica Nucleare (INFN) through the project ``QUANTUM'', by PNRR MUR Project PE0000023-NQSTI, by Regione Puglia and QuantERA ERA-NET Cofund in Quantum Technologies (Grant No. 731473), project PACE-IN, by the Italian National Group of Mathematical Physics (GNFM-INdAM), and by the Italian funding within the ``Budget MUR - Dipartimenti di Eccellenza 2023--2027'' (Law 232, 11 December 2016) - Quantum Sensing and Modelling for One-Health (QuaSiModO).
\bibliography{dim_biblio}

\providecommand{\noopsort}[1]{}\providecommand{\singleletter}[1]{#1}%
\begin{thebibliography}{10}

\bibitem{History_QM}
D.~Ter Haar.
\newblock {\em The Old Quantum Theory}.
\newblock Pergamon Press, Oxford, 1967.

\bibitem{Teschl_Math}
G.~Teschl.
\newblock {\em Mathematical Methods in Quantum Mechanics With Applications to
  Schrödinger Operators}.
\newblock American Mathematical Society, Providence, 2014.

\bibitem{breuer_petr}
H.~P. Breuer and F.~Petruccione.
\newblock {\em The {T}heory of {O}pen {Q}uantum {S}ystems}.
\newblock Oxford University Press, Oxford, 2002.

\bibitem{baum_narn_2008}
B.~Baumgartner and H.~Narnhofer.
\newblock Analysis of quantum semigroups with {GKS}–{L}indblad generators:
  {II}. general.
\newblock {\em J. Phys. A: Math. Theor.}, 41:395303, 2008.

\bibitem{Spohn_77}
H.~Spohn.
\newblock An algebraic condition for the approach to equilibrium of an open
  {N}-level system.
\newblock {\em Lett. Math. Phys.}, 2:33, 1977.

\bibitem{Frigerio_78}
A.~Frigerio.
\newblock Stationary states of quantum dynamical semigroups.
\newblock {\em Commun. Math. Phys.}, 63:269, 1978.

\bibitem{Frigerio_Verri_82}
A.~Frigerio and M.~Verri.
\newblock Long-time asymptotic properties of dynamical semigroups on
  {W}*-algebras.
\newblock {\em Math Z.}, 180:275, 1982.

\bibitem{Spohn_rev}
H.~Spohn.
\newblock Kinetic equations from {H}amiltonian dynamics: Markovian limits.
\newblock {\em Rev. Mod. Phys.}, 52:569, 1980.

\bibitem{fagnola_2001}
F.~Fagnola and R.~Rebolledo.
\newblock On the existence of stationary states for quantum dynamical
  semigroups.
\newblock {\em J. Math. Phys.}, 42:1296, 2001.

\bibitem{arias_2002}
A.~Arias, A.~Gheondea, and S.~Gudder.
\newblock Fixed points of quantum operations.
\newblock {\em J. Math. Phys.}, 43:5872, 2002.

\bibitem{Conserved_Albert}
V.~V. Albert and L.~Jiang.
\newblock Symmetries and conserved quantities in {L}indblad master equations.
\newblock {\em Phys. Rev. A}, 89:022118, 2014.

\bibitem{Wolf's_notes}
M.~M. Wolf.
\newblock Quantum channels \& operations: Guided tour.
\newblock \url{https://mediatum.ub.tum.de/node?id=1701036}, 2012.

\bibitem{Nigro_steady_states}
D.~Nigro.
\newblock {On the uniqueness of the steady-state solution of the
  Lindblad–Gorini–Kossakowski–Sudarshan equation}.
\newblock {\em J. Stat. Mech. Theory Exp.}, 2019:043202, 2019.

\bibitem{AFK_asympt_1}
D.~Amato, P.~Facchi, and A.~Konderak.
\newblock Asymptotics of quantum channels.
\newblock {\em J. Phys. A: Math. Theor.}, 56:265304, 2023.

\bibitem{AFK_asympt_2}
D.~Amato, P.~Facchi, and A.~Konderak.
\newblock Asymptotic dynamics of open quantum systems and modular theory.
\newblock In {\em Correggi, M., Falconi, M. (eds) Quantum Mathematics II. INdAM
  2022, Springer INdAM Series}, volume~58, pages 169--181, 2023.

\bibitem{AFK_asympt_3}
D.~Amato, P.~Facchi, and A.~Konderak.
\newblock {$\!\!\!\!\!$Decoherence-free algebras in quantum dynamics}.
\newblock {\em arXiv:2403.12926 [quant-ph]}, 2024.

\bibitem{Yoshida_uniq_steady}
H.~Yoshida.
\newblock {Uniqueness of steady states of Gorini-Kossakowski-Sudarshan-Lindblad
  equations: A simple proof}.
\newblock {\em Phys. Rev. A}, 109:022218, 2024.

\bibitem{Zoller_res_eng}
J.~F. Poyatos, J.~I. Cirac, and P.~Zoller.
\newblock Quantum reservoir engineering with laser cooled trapped ions.
\newblock {\em Phys. Rev. Lett.}, 77:4728, 1996.

\bibitem{Wolf_res_eng}
F.~Verstraete, M.~M. Wolf, and J.~I. Cirac.
\newblock Quantum computation and quantum-state engineering driven by
  dissipation.
\newblock {\em Nature Phys.}, 5:633, 2009.

\bibitem{Cirac_res_eng}
F.~Pastawski, L.~Clemente, and J.~I. Cirac.
\newblock Quantum memories based on engineered dissipation.
\newblock {\em Phys. Rev. A}, 83:012304, 2011.

\bibitem{jex_synchr}
D.~{\v{S}}t{\v{e}}rba, J.~Novotn{\'{y}}, and I.~Jex.
\newblock {Asymptotic phase-locking and synchronization in two-qubit systems}.
\newblock {\em J. Phys. Commun.}, 7:045003, 2023.

\bibitem{Albert_Thesis}
V.~V. Albert.
\newblock {\em Lindbladians with Multiple Steady States: Theory and
  Applications}.
\newblock {Ph.D.} thesis, Yale University, 2017.

\bibitem{Zanardi_noi_sub}
P.~Zanardi and M.~Rasetti.
\newblock Noiseless quantum codes.
\newblock {\em Phys. Rev. Lett.}, 79:3306, 1997.

\bibitem{Lidar_dec_sub}
D.~A. Lidar, I.~L. Chuang, and K.~B. Whaley.
\newblock Decoherence-free subspaces for quantum computation.
\newblock {\em Phys. Rev. Lett.}, 81:2594, 1998.

\bibitem{Zan_2014}
P.~Zanardi and L.~{Campos Venuti}.
\newblock Coherent quantum dynamics in steady-state manifolds of strongly
  dissipative systems.
\newblock {\em Phys. Rev. Lett.}, 113:240406, 2014.

\bibitem{Zan_diss_prot_dyn}
P.~Zanardi and L.~{Campos Venuti}.
\newblock Geometry, robustness, and emerging unitarity in dissipation-projected
  dynamics.
\newblock {\em Phys. Rev. A}, 91:052324, 2015.

\bibitem{gen_ad_Paolo}
D.~Burgarth, P.~Facchi, H.~Nakazato, S.~Pascazio, and K.~Yuasa.
\newblock Generalized adiabatic theorem and strong-coupling limits.
\newblock {\em Quantum}, 3:152, 2019.

\bibitem{Noise_sub_2001}
L.~Viola, E.~M. Fortunato, M.~A. Pravia, E.~Knill, R.~Laflamme, and D.~G. Cory.
\newblock Experimental realization of noiseless subsystems for quantum
  information processing.
\newblock {\em Science}, 293:2059, 2001.

\bibitem{Dec_sub_2000}
G.~Kwiat, A.~J. Berglund, J.~B. Altepeter, and A.~G. White.
\newblock Experimental verification of decoherence-free subspaces.
\newblock {\em Science}, 290:498, 2000.

\bibitem{top_prot}
L.~{Campos Venuti}, Z.~Ma, H.~Saleur, and S.~Haas.
\newblock Topological protection of coherence in a dissipative environment.
\newblock {\em Phys. Rev. A}, 96:053858, 2017.

\bibitem{top_prot_dis}
Y.~Yao, H.~Schl{\"o}mer, Z.~Ma, L.~{Campos Venuti}, and S.~Haas.
\newblock Topological protection of coherence in disordered open quantum
  systems.
\newblock {\em Phys. Rev. A}, 104:012216, 2021.

\bibitem{QED_ph_tr}
M.~Fitzpatrick, N.~M. Sundaresan, A.~C.~Y. Li, J.~Koch, and A.~A. Houck.
\newblock Observation of a dissipative phase transition in a one-dimensional
  circuit {QED} lattice.
\newblock {\em Phys. Rev. X}, 7:011016, 2017.

\bibitem{crit_prop}
W.~Casteels, R.~Fazio, and C.~Ciuti.
\newblock Critical dynamical properties of a first-order dissipative phase
  transition.
\newblock {\em Phys. Rev. A}, 95:012128, 2017.

\bibitem{ph_tra}
A.~Biella, L.~Mazza, I.~Carusotto, D.~Rossini, and R.~Fazio.
\newblock Photon transport in a dissipative chain of nonlinear cavities.
\newblock {\em Phys. Rev. A}, 91:053815, 2015.

\bibitem{opt_cav}
K.~Debnath, E.~Mascarenhas, and V.~Savona.
\newblock Nonequilibrium photonic transport and phase transition in an array of
  optical cavities.
\newblock {\em New J. Phys.}, 19:115006, 2017.

\bibitem{benatti_XX_chain_1}
F.~Benatti, R.~Floreanini, and L.~Memarzadeh.
\newblock Bath-assisted transport in a three-site spin chain: global versus
  local approach.
\newblock {\em Phys. Rev. A}, 102:042219, 2020.

\bibitem{benatti_XX_chain_2}
F.~Benatti, R.~Floreanini, and L.~Memarzadeh.
\newblock Exact steady state of the open {XX}-spin chain: Entanglement and
  transport properties.
\newblock {\em PRX Quantum}, 2:030344, 2021.

\bibitem{hopfield1982neural}
J.~J. Hopfield.
\newblock Neural networks and physical systems with emergent collective
  computational abilities.
\newblock {\em Proc. Natl. Acad. Sci. U.S.A.}, 79:2554--2558, 1982.

\bibitem{sanpera_attractor_2021}
M.~Lewenstein, A.~Gratsea, A.~Riera-Campeny, A.~Aloy, V.~Kasper, and
  A.~Sanpera.
\newblock Storage capacity and learning capability of quantum neural networks.
\newblock {\em Quantum Sci. Technol.}, 6:045002, 2021.

\bibitem{sanpera_attractor_2022_corr}
M.~Lewenstein, A.~Gratsea, A.~Riera-Campeny, A.~Aloy, V.~Kasper, and
  A.~Sanpera.
\newblock Corrigendum: Storage capacity and learning capability of quantum
  neural networks.
\newblock {\em Quantum Sci. Technol.}, 7:029502, 2022.

\bibitem{Chru_CKK_1}
D.~Chru{\'s}ci{\'n}ski, G.~Kimura, A.~Kossakowski, and Y.~Shishido.
\newblock Universal constraint for relaxation rates for quantum dynamical
  semigroup.
\newblock {\em Phys. Rev. Lett.}, 127:050401, 2021.

\bibitem{Hein_Ziman}
T.~Heinosaari and M.~Ziman.
\newblock {\em The Mathematical Language of Quantum Theory: From Uncertainty to
  Entanglement}.
\newblock Cambridge University Press, Cambridge, 2012.

\bibitem{horn_john}
R.~A. Horn and C.~R. Johnson.
\newblock {\em Matrix Analysis}.
\newblock Cambridge University Press, Cambridge, 2012.

\bibitem{albert_2019}
V.~V. Albert.
\newblock Asymptotics of quantum channels: conserved quantities, an adiabatic
  limit, and matrix product states.
\newblock {\em Quantum}, 3:151, 2019.

\bibitem{jex_st_2012}
J.~Novotn{\'{y}}, G.~Alber, and I.~Jex.
\newblock {Asymptotic properties of quantum Markov chains}.
\newblock {\em J. Phys. A: Math. Theor.}, 45:485301, 2012.

\bibitem{jex_st_2018}
J.~Novotn{\'{y}}, J.~Mary{\v{s}}ka, and I.~Jex.
\newblock {Quantum Markov processes: from attractor structure to explicit forms
  of asymptotic states.}
\newblock {\em Eur. Phys. J. Plus}, 133:310, 2018.

\bibitem{GKS_76}
V.~Gorini, A.~Kossakowski, and E.~C.~G. Sudarshan.
\newblock Completely positive dynamical semigroups of {N}-level systems.
\newblock {\em J. Math. Phys.}, 17:821, 1976.

\bibitem{Lindblad_76}
G.~Lindblad.
\newblock On the generators of quantum dynamical semigroups.
\newblock {\em Commun. Math. Phys.}, 48:119, 1976.

\bibitem{Alicki_Lendi}
R.~Alicki and K.~Lendi.
\newblock {\em Quantum Dynamical Semigroups and Applications}.
\newblock Springer-Verlag, Berlin Heidelberg, 2007.

\bibitem{kimura_17}
G.~Kimura, S.~Ajisaka, and K.~Watanabe.
\newblock Universal constraints on relaxation times for d-level {GKLS} master
  equations.
\newblock {\em Open Syst. Inf. Dyn.}, 24:1740009, 2017.

\bibitem{Chru_CKK_2}
D.~Chru{\'s}ci{\'n}ski, R.~Fujii, G.~Kimura, and H.~Ohno.
\newblock Constraints for the spectra of generators of quantum dynamical
  semigroups.
\newblock {\em Linear Algebra Appl.}, 630:293, 2021.

\bibitem{Abragam}
A.~Abragam.
\newblock {\em Principles of Nuclear Magnetism}.
\newblock Oxford University Press, New York, 1961.

\bibitem{Slichter}
C.~P. Slichter.
\newblock {\em Principles of Magnetic Resonance}.
\newblock Springer-Verlag, New York, 1990.

\bibitem{Kato_per}
T.~Kato.
\newblock {\em Perturbation Theory for Linear Operators}.
\newblock Springer-Verlag, Berlin Heidelberg, 1995.

\bibitem{conj_groups}
J.~E. Humphreys.
\newblock {\em Conjugacy Classes in Semisimple Algebraic Groups}.
\newblock American Mathematical Society, Providence, 1995.

\bibitem{spectra_ph_tr}
F.~Minganti, A.~Biella, N.~Bartolo, and C.~Ciuti.
\newblock Spectral theory of {L}iouvillians for dissipative phase transitions.
\newblock {\em Phys. Rev. A}, 98:042118, 2018.

\bibitem{spectra_rand_2019}
S.~Denisov, T.~Laptyeva, W.~Tarnowski, D.~Chru{\'s}ci{\'n}ski, and
  K.~{\.Z}yczkowski.
\newblock Universal spectra of random {L}indblad operators.
\newblock {\em Phys. Rev. Lett.}, 123:140403, 2019.

\bibitem{zyn_random_2021}
R.~Kukulski, I.~Nechita, {\L}.~Pawela, Z.~Pucha\l{}a, and K.~{\.Z}yczkowski.
\newblock Generating random quantum channels.
\newblock {\em J. Math. Phys.}, 62:062201, 2021.

\bibitem{spectra_rand_2021}
W.~Tarnowski, I.~Yusipov, T.~Laptyeva, S.~Denisov, D.~Chru{\'s}ci{\'n}ski, and
  K.~{\.Z}yczkowski.
\newblock Random generators of {M}arkovian evolution: A quantum-classical
  transition by superdecoherence.
\newblock {\em Phys. Rev. E}, 104:034118, 2021.

\end{thebibliography}
\bibliographystyle{unsrt}

\end{document}